\newcommand{\leftsecond}{\left[}
\newcommand{\rightsecond}{\right]}
\newcommand{\leftthird}{\left\{}
\newcommand{\rightthird}{\right\}}
\newcommand{\leftdot}{\left.}
\newcommand{\rightdot}{\right.}
\begin{document}

\title{Distributed Empirical Likelihood Inference With or Without Byzantine Failures}

\author{\name Qihua Wang \email qhwang@amss.ac.cn \\
       \addr Academy of Mathematics and Systems Science, Chinese Academy of Sciences\\
       University of Chinese Academy of Sciences\\
       Zhongguancan East Road, Beijing 100190, China
       \AND
       \name Jinye Du \email dujinye14@mails.ucas.ac.cn \\
       \addr Academy of Mathematics and Systems Science, Chinese Academy of Sciences\\
       University of Chinese Academy of Sciences\\
       Zhongguancan East Road, Beijing 100190, China
       \AND
       \name Ying Sheng \email shengying@amss.ac.cn \\
       \addr Academy of Mathematics and Systems Science, Chinese Academy of Sciences\\
       University of Chinese Academy of Sciences\\
       Zhongguancan East Road, Beijing 100190, China}

\editor{My editor}

\maketitle

\begin{abstract}
Empirical likelihood is a very important nonparametric approach which is of wide application. However, it is hard and even infeasible to calculate the empirical log-likelihood ratio statistic with massive data. The main challenge is the calculation of the Lagrange multiplier. This motivates us to develop a distributed empirical likelihood method by calculating the Lagrange multiplier in a multi-round distributed manner. It is shown that the distributed empirical log-likelihood ratio statistic is asymptotically standard chi-squared under some mild conditions. The proposed algorithm is communication-efficient and achieves the desired accuracy in a few rounds. Further, the distributed empirical likelihood method is extended to the case of Byzantine failures. A machine selection algorithm is developed to identify the worker machines without Byzantine failures such that the distributed empirical likelihood method can be applied. The proposed methods are evaluated by numerical simulations and illustrated with an analysis of airline on-time performance study and a surface climate analysis of Yangtze River Economic Belt.
\end{abstract}

\begin{keywords}
asymptotic distribution, Byzantine failure, divide-and-conquer, optimal function, selection consistency
\end{keywords}

\section{Introduction}\label{sec:intro}

Empirical likelihood (EL) was first introduced by \citet{owen1988empirical,owen1990empirical} as a nonparametric approach for constructing confidence regions. It has many nice properties: automatic determination of the shape of confidence regions \citep{owen1988empirical}, incorporation of auxiliary information through constraints \citep{qin1994empirical}, and range respecting and transformation preserving \citep{owen2001empirical}.
EL has been studied extensively. See, e.g., \cite{hall1990methodology,owen1991empirical,diciccio1991empirical,chen1993empirical,chen1993smoothed,qin1993empirical,qin1995estimating,kitamura1997empirical}.
As a recognized powerful tool in statistical inference, the use of EL has become very popular in various fields, including missing data analysis \citep{wang2002empirical, wang2009empirical}, high-dimensional data analysis \citep{tang2010penalized,lahiri2012penalized, leng2012penalized, chang2018new,chang2021high}, censored data analysis \citep{li2003empirical,he2016empirical,tang2020penalized}, longitudinal data analysis \citep{xue2007empirical,wang2010generalized,shi2011two}, measurement error analysis \citep{wang2002biome}, meta-analysis \citep{qin2015using,huang2016efficient,han2019empirical,zhang2020generalized}, information integration \citep{ma2022statistical}, and so on.
For a comprehensive review of EL, see \cite{owen2001empirical}, \cite{chen2009review}, and \cite{lazar2021review}.

In recent years, unprecedented technological advances in data generation and acquisition have led to a proliferation of massive data, posing new challenges to statistical analysis. When the sample size is extremely large, it is computationally infeasible to perform standard statistical analysis on a single machine due to memory limitations. Even for a single machine with sufficient memory, optimization algorithms with a massive amount of data are computationally expensive, leading to unaffordable time costs. To reduce the cost of computation, there has been a growing interest in developing distributed statistical approaches in recent years.
The main strategy of distributed statistical approaches is to divide the entire data set into several subsets, calculate the local statistic using each subset in each worker machine, and aggregate local statistics from worker machines into a summary statistic.
The distributed statistical approaches have drawn much attention in various areas, including distributed  M-estimation \citep{shi2018massive,jordan2019communication,fan2021communication}, distributed empirical likelihood estimators \citep{liu2023distributed,zhou2023distributed},   principal component analysis \citep{fan2019distributed,chen2021distributed,huang2021communication}, high-dimensional test and estimation \citep{lee2017communication,battey2018distributed,hector2021distributed}, quantile regression \citep{chen2019quantile,volgushev2019distributed}, support vector machine \citep{lian2018divide,wang2019distributed}, and so on.

To develop the distributed empirical likelihood method, a simple method is to calculate the global empirical log-likelihood with the Lagrange multiplier calculated by the average of local Lagrange multipliers from worker machines.
Another natural method is the split sample empirical likelihood approach \citep{jaeger2020split}, which calculates the global empirical log-likelihood by taking the summation of local empirical log-likelihood from worker machines.
However, the empirical log-likelihood ratio statistics obtained by the aforementioned two approaches are not asymptotically standard chi-squared when the number of worker machines diverges.
In practice, the number of machines is usually divergent with massive data. Otherwise, the size of the sample in every machine must be of the same order as the size of the massive data set.
Recently, \cite{zhou2023distributed} has developed a distributed empirical likelihood method.
However, they use the alternating direction method of multipliers, which introduces superfluous parameters in calculation. 
This results in a slow convergence rate and heavy computation load.
Additionally, it imposes strict restrictions on the initial values and the number of worker machines, which is often impractical.
The number of machines is restricted to diverge at a very slow rate such that the sample size in every 
machine must be at least of the same order with $N^{2/3}$,  where $N$ is the whole sample size.
This yields almost the same calculation and data storage problems as the case of a single machine when $N$ is massive, and hence it does not make sense in practice.


The existing literature on distributed empirical likelihood has not made an essential advance.
This motivates us to develop a distributed empirical likelihood (DEL) method, which allows the number of machines to diverge at a fast rate such that every machine can save the data allocated and make calculation in a short time even if $N$ is massive.
The literature uses average or weighted average to aggregate the local empirical likelihood statistics.
We use a different technique.
By convex dual representation \citep{owen2001empirical}, the Lagrange multiplier can be calculated by minimizing a global optimal function, which is the average of local optimal functions. The naive method, which takes the average of the local Lagrange multipliers calculated by minimizing local optimal functions, is not applicable since the average does not approximate the global Lagrange multiplier sufficiently well and the resulting distributed empirical log-likelihood ratio statistic is not asymptotically standard chi-squared when the number of machines diverges.
Similar comments also apply to the split empirical likelihood approach. To tackle this problem, we construct a modified local optimal function by replacing the first derivative in Taylor's expansion of the local optimal function in each worker machine with the global one.
The Lagrange multiplier is then calculated using a multi-round distributed algorithm. In each round, we calculate the aggregated Lagrange multiplier by averaging local Lagrange multipliers, which are obtained by minimizing the modified local optimal functions.
The proposed DEL method can not only simultaneously achieve high statistical accuracy and low computation cost, but also inherit the nice properties of EL.
Furthermore, it is shown that the distributed empirical log-likelihood ratio statistic is asymptotically standard chi-squared under some mild moment conditions.

It is worth noting that Byzantine failures may occur for distributed statistical methods, where the information sent from a worker machine can be arbitrarily erroneous due to hardware or software breakdowns, data crashes, or communication failures \citep{lamport1982byzantine}.
The definition of Byzantine failures in the mathematical form can be seen in \cite{tu2021variance}.
The Byzantine failures, if not addressed properly, may lead to invalid statistical inferences since the averaged global gradient can be completely skewed by some worker machines with Byzantine failures.
In recent years, various approaches have been proposed for statistical learning and inference with Byzantine failures \citep{blanchard2017machine, alistarh2018byzantine, yin2018byzantine, tu2021variance}. Instead of using the vanilla gradient mean, existing methods aggregate gradients from worker machines using some robust mean estimators, such as the trimmed mean \citep{yin2018byzantine}, the median of mean \citep{tu2021variance}. Different from these existing approaches, our goal is to identify the worker machines without Byzantine failures, and fully utilize gradient information from such machines. To this end, we first propose a machine selection algorithm to identify the worker machines which return the correct gradient information. The proposed DEL method can then be applied to the selected worker machines. Under some regularity conditions, we establish the selection consistency for the proposed machine selection procedure and prove that the corresponding distributed empirical log-likelihood ratio statistic with the selected machines is asymptotically standard chi-squared.

The paper is organized as follows.
In Section \ref{Sec:DEL}, we develop DEL for mean inference with massive data and prove that the distributed empirical log-likelihood ratio statistic is asymptotically standard chi-squared.
In Section \ref{Sec:RDEL}, we extend DEL to the case of Byzantine failures by developing a machine selection algorithm to select machines without Byzantine failures. It is proved that the algorithm possesses the selection consistency property and the corresponding distributed empirical log-likelihood ratio statistic is asymptotically standard chi-squared.
In Section \ref{simulations}, we conduct some simulation studies to demonstrate the performance of the proposed methods.
An analysis of airline on-time performance study and a surface climate analysis of Yangtze River Economic Belt are given in Section \ref{realdata}.

\section{Distributed Empirical Likelihood Inference Without Byzantine Failures}\label{Sec:DEL}

To illustrate the proposed DEL method, we begin with the statistical inference of the mean. Let $\bm Z_1,\bm Z_2,\ldots,\bm Z_N \in \mathbb R^d$ be independent and identically distributed random vectors.
Throughout this paper, we assume that $\operatorname{E} (\bm Z_1) =\bm\mu_0$.
By \cite{owen2001empirical}, if the convex hull of $\{\bm Z_1,\ldots,\bm Z_N\}$ contains $\bm\mu\in\mathbb{R}^d$, the empirical log-likelihood ratio statistic is
\begin{equation}\label{eq:ellmu}
\ell(\bm\mu) = 2\sum_{s=1}^{N} \log \leftsecond 1+ \widehat{\bm\lambda}^{*\top}(\bm Z_s-\bm\mu)\rightsecond,
\end{equation}
where $\widehat{\bm\lambda}^*$ is the unique solution of
$$
\frac{1}{N}\sum_{s=1}^{N} \frac{\bm Z_s-\bm\mu}{1+\bm\lambda^\top (\bm Z_s-\bm\mu)} = \bm 0.
$$
The calculation of $\widehat{\bm\lambda}^*$ is time costly and requires large memory capacity when sample size $N$ is extraordinarily large.
However, it is quite challenging to develop a distributed method for obtaining an approximation of $\widehat{\bm\lambda}^*$ such that $\ell(\bm\mu)$ with $\widehat{\bm\lambda}^*$ replaced by the approximation is asymptotically standard chi-squared.

It is noted that $\widehat{\bm\lambda}^*=\arg\min_{\bm\lambda} g(\bm\lambda;\bm\mu)$ by the convex dual representation, where $g(\bm\lambda;\bm\mu) =  -N^{-1}\sum_{i=1}^{N}\log \leftsecond 1+\bm\lambda^\top (\bm Z_i-\bm\mu) \rightsecond$ is the global optimal function.
To develop a distributed empirical likelihood method, we consider the distributed setting where $\bm Z_1,\ldots,\bm Z_N$ are stored in $K$ worker machines with the equal sample size $n=N/K$ and denote by $\bm X_{i,j}$ the $j$-th ($j=1,\ldots,n$) observation in the $i$-th ($i=1,\ldots,K$) worker machine.
For $i=1,\ldots,K$, we define the $i$-th local optimal function with the data stored in the $i$-th worker machine by $g_i (\bm\lambda;\bm\mu ) = -n^{-1}\sum_{j=1}^{n}\log \leftsecond 1+ \bm\lambda^\top(\bm X_{i,j} - \bm\mu) \rightsecond$.
The global optimal function can be represented as $g(\bm\lambda;\bm\mu) = K^{-1}\sum_{i=1}^K g_i(\bm\lambda;\bm\mu)$. Hereinafter, we denote $g(\bm\lambda;\bm\mu)$ and $g_i(\bm\lambda;\bm\mu)$ as $g(\bm\lambda)$ and $g_i(\bm\lambda)$ whenever no confusion arises.

The naive method calculates the Lagrange multiplier by the average $\bm{\bar\lambda}^* = K^{-1}\sum_{i=1}^{K} \bm{\widehat\lambda}_i^*$, where $\bm{\widehat\lambda}_i^*=\arg\min_{\bm\lambda} g_i(\bm\lambda)$ for $i=1,\ldots,K$.
However, $\|\bm{\bar\lambda}^* - \bm{\widehat\lambda}^* \|_2$ cannot attain $o_p(N^{-1/2})$ with a diverging $K$. As a consequence, the corresponding empirical log-likelihood ratio statistic is not asymptotically standard chi-squared.
Recalling the definition of $\widehat{\bm\lambda}^*_i$, the main reason is that the minimizer of $g_i(\bm\lambda)$ is not close to that of $g(\bm\lambda)$ sufficiently. This motivates us to define a modified local optimal function whose minimizer is closer to that of $g(\bm\lambda)$ by modifying $g_i(\bm\lambda)$ for $i=1,\ldots,K$. In what follows, let us give the details.
Given an initial estimator $\widehat{\bm\lambda}_0$, by Taylor's expansions of $g(\bm\lambda)$ and $g_i(\bm\lambda)$ at $\widehat{\bm\lambda}_0$, it yields
\begin{equation}\label{eq:g}
g(\bm\lambda) = g(\widehat{\bm\lambda}_0) + \nabla g(\widehat{\bm\lambda}_0)^\top(\bm\lambda - \widehat{\bm\lambda}_0) + R(\bm\lambda;\widehat{\bm\lambda}_0),
\end{equation}
and
\begin{equation}\label{eq:gi}
g_i(\bm\lambda) = g_i(\widehat{\bm\lambda}_0) + \nabla g_i(\widehat{\bm\lambda}_0)^\top(\bm\lambda - \widehat{\bm\lambda}_0) + R_i(\bm\lambda;\widehat{\bm\lambda}_0),
\end{equation}
where $R(\bm\lambda;\widehat{\bm\lambda}_0) = \sum_{k=2}^{\infty} \nabla^k g(\widehat{\bm\lambda}_0) (\bm\lambda - \widehat{\bm\lambda}_0)^{\otimes k}/k!$ and $R_i(\bm\lambda;\widehat{\bm\lambda}_0) = \sum_{k=2}^{\infty} \nabla^k g_i(\widehat{\bm\lambda}_0) (\bm\lambda - \widehat{\bm\lambda}_0)^{\otimes k}/k!$ are the corresponding remainders of higher-order derivatives of $g(\bm\lambda)$ and $g_i(\bm\lambda)$, respectively. Clearly, both the minimizers of $g_i(\bm\lambda)$ and $g(\bm\lambda)$ are determined mainly by the linear term on the right hand side of \eqref{eq:g} and \eqref{eq:gi}.
In order to make the minimizers of $g_i(\bm\lambda)$ closer to that of $g(\bm\lambda)$ sufficiently, we define the modified local optimal functions
\begin{equation}\label{eq:checkgi}
\check g_i(\bm\lambda;\widehat{\bm\lambda}_0) := g_i(\widehat{\bm\lambda}_0) + \nabla g(\widehat{\bm\lambda}_0)^\top(\bm\lambda - \widehat{\bm\lambda}_0) + R_i(\bm\lambda;\widehat{\bm\lambda}_0)
\end{equation}
by replacing $\nabla g_i(\widehat{\bm\lambda}_0)$ with $\nabla g(\widehat{\bm\lambda}_0)$ in \eqref{eq:gi} for $i=1,\ldots,K$.
From \eqref{eq:gi}, we have
\begin{equation}\label{eq:Ri}
R_i(\bm\lambda;\widehat{\bm\lambda}_0)=g_i(\bm\lambda) - g_i(\widehat{\bm\lambda}_0) - \nabla g_i(\widehat{\bm\lambda}_0)^\top(\bm\lambda - \widehat{\bm\lambda}_0).
\end{equation}
Take
\begin{equation}\label{eq:tildegi2}
\widetilde g_i(\bm\lambda;\widehat{\bm\lambda}_0) = g_i(\bm\lambda) + \leftsecond \nabla g(\widehat{\bm\lambda}_0) - \nabla g_i(\widehat{\bm\lambda}_0) \rightsecond^\top \bm\lambda, ~i=1,\ldots,K,
\end{equation}
by substituting \eqref{eq:Ri} into \eqref{eq:checkgi} and then omitting the constant term which does not depend on $\bm\lambda$. $\widetilde g_i(\bm\lambda;\widehat{\bm\lambda}_0)$ is equivalent to $\check g_i(\bm\lambda;\widehat{\bm\lambda}_0)$ in \eqref{eq:checkgi} in the sense that they have the same minimizer.
By \eqref{eq:tildegi2}, the minimizer of $\widetilde g_i(\bm\lambda; \widehat{\bm\lambda}_0)$ depends on $\widehat{\bm\lambda}_0$ for $i=1,\ldots,K$. In practice, it is infeasible to obtain an initial estimator $\widehat{\bm\lambda}_0$ such that the Lagrange multiplier $\widehat{\bm \lambda}_1 = K^{-1}\sum_{i=1}^K \arg\min_{\bm\lambda} \widetilde g_i(\bm\lambda;\widehat{\bm\lambda}_0)$ satisfies $\| \widehat{\bm \lambda}_1 -\widehat{\bm\lambda}^* \|_2 =o_p(N^{-1/2})$. Therefore, the empirical log-likelihood ratio statistic in \eqref{eq:ellmu} with $\widehat{\bm\lambda}^*$ replaced by $\widehat{\bm \lambda}_1$ is generally not asymptotically standard chi-squared. To tackle this problem, we develop a multi-round distributed algorithm to calculate the Lagrange multiplier in Algorithm \ref{alg:EL}.
The Lagrange multiplier after $T$ rounds is denoted by $\widehat{\bm\lambda}_T$.
Under some regularity conditions, it can be proved that the empirical log-likelihood ratio statistic is still asymptotically standard chi-squared with $\widehat{\bm\lambda}^*$ replaced by $\widehat{\bm\lambda}_T$ as long as $T$ is large enough.
In what follows, we present the algorithm.

\begin{algorithm}[htbp]
\caption{Multi-round distributed empirical likelihood method}
\label{alg:EL}
\begin{algorithmic}[1]
\Require
An initial estimator $\widehat{\bm\lambda}_0$, the number of round $T$.

\State Set $t=0$.

\For{$t=0,1,\ldots,T-1$}

\State For $i=1,\ldots,K$, the $i$-th worker machine computes the local gradient $\nabla g_i(\widehat{\bm\lambda}_t)$ and sends it to the central machine;

\State The central machine computes the global gradient
$$
\nabla g(\widehat{\bm\lambda}_t) = \frac{1}{K} \sum_{i=1}^{K} \nabla g_i(\widehat{\bm\lambda}_t),
$$
and broadcasts it to each worker machine;

\State For $i=1,2\ldots,K$, the $i$-th worker machine computes
\begin{equation*}
\widehat{\bm\lambda}_{t+1}^{(i)} = \arg\min_{\bm\lambda} \leftthird g_i(\bm\lambda) + \leftsecond \nabla g(\widehat{\bm\lambda}_t) - \nabla g_i(\widehat{\bm\lambda}_t) \rightsecond^\top \bm\lambda \rightthird,
\end{equation*}
and sends it to the central machine;

\State The central machine computes $\widehat{\bm\lambda}_{t+1} = K^{-1}\sum_{i=1}^{K}\widehat{\bm\lambda}_{t+1}^{(i)}$ and broadcasts it to each worker machine.

\EndFor

\Ensure
$\widehat{\bm\lambda}_{T}$.
\end{algorithmic}
\end{algorithm}

The initial estimator $\widehat{\bm\lambda}_0$ can be set as the zero vector or taken to be $\widehat{\bm\lambda}_0=\arg\min_{\bm\lambda} g_1(\bm\lambda)$.
The Lagrange multiplier after $T$ rounds achieves the desired accuracy, that is $\|\widehat{\bm\lambda}_T - \widehat{\bm\lambda}^*\|_2 = o_p(N^{-1/2})$, as long as $T$ is large enough.
Given $\widehat {\bm \lambda}_T$, the distributed empirical log-likelihood ratio statistic can be constructed by
\begin{eqnarray}
\label{ellDEL}
\ell_{DEL}(\bm\mu)=2\sum_{i=1}^{K}\sum_{j=1}^{n} \log \leftsecond 1+\widehat{\bm\lambda}_T^\top(\bm X_{i,j}-\bm\mu)\rightsecond.
\end{eqnarray}
To establish the asymptotic distribution of $\ell_{DEL}(\bm\mu)$, we first prove the following lemma, which presents some nice properties of the global optimal function $g(\bm\lambda)$.

\begin{lemma}\label{Lemma:assumption}
Assume the eigenvalues of $\operatorname{Cov}(\bm X_{1,1})$ are bounded away from zero and infinity, $\operatorname{E}\| \bm X_{1,1}\|_2^\beta<\infty$ for $\beta \ge 4$, and $K=O(N^{1-2/\beta})$.
Let $\mathcal B(\bm \alpha, r) := \{ \bm \lambda \in \mathbb R^d :\, \|\bm \lambda - \bm \alpha \|_2 \le r\}$ for some vector $\bm \alpha\in \mathbb R^d$ and constant $r > 0$. As $n\to\infty$, the following conclusions hold with probability tending to $1$:

(a) (Strong convexity) There exists some constant $C_1>0$ such that $\nabla^2g(\bm\lambda)\succeq \tau \mathbf I_d$ for $\bm\lambda \in \mathcal B(\widehat{\bm\lambda}^*,C_1 n^{-1/2})$, where $\tau>0$ is a constant, $\mathbf I_d$ is the $d\times d$ identity matrix, and $\mathbf P\succeq \mathbf Q$ means $\mathbf P-\mathbf Q$ is positive semi-definite for matrices $\mathbf P$ and $\mathbf Q$.

(b) (Homogeneity) There exists some constant $C_2>0$ such that $\| \nabla^2 g_i(\bm\lambda) - \nabla^2 g(\bm\lambda) \|_2 \leq C_2 n^{-1/2} \sqrt{\log K}$ uniformly for $i=1,\ldots,K$ and $\bm\lambda\in \mathcal B(\widehat{\bm\lambda}^*,C_1 n^{-1/2})$.

(c) (Smoothness of Hessian) There exists some constant $M>0$ such that $\| \nabla^2 g(\bm\lambda_1) - \nabla^2 g(\bm\lambda_2) \|_2 \leq M \|\bm\lambda_1 - \bm\lambda_2 \|_2$ for $ \bm\lambda_1, \bm\lambda_2 \in \mathcal B(\widehat{\bm\lambda}^*, C_1 n^{-1/2})$.
\end{lemma}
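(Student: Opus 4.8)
The plan is to reduce all three statements to the behaviour of the centred summand $\bm W_{i,j}:=\bm X_{i,j}-\bm\mu$ (writing $\bm W_s$ for the relabelled global sequence, $s=1,\dots,N$) on the ball $\mathcal B(\widehat{\bm\lambda}^*,C_1n^{-1/2})$, working throughout on a single good event on which the denominators in the derivatives of $g$ are uniformly controlled. Writing the derivatives explicitly,
\[
\nabla^2 g(\bm\lambda) = \frac{1}{N}\sum_{s=1}^N \frac{\bm W_s\bm W_s^\top}{(1+\bm\lambda^\top\bm W_s)^2}, \qquad \nabla^2 g_i(\bm\lambda) = \frac{1}{n}\sum_{j=1}^n \frac{\bm W_{i,j}\bm W_{i,j}^\top}{(1+\bm\lambda^\top\bm W_{i,j})^2},
\]
(with the analogous third derivative involving $\bm W^{\otimes 3}$), the foundational step is to show $\max_{i,j}\sup_{\bm\lambda\in\mathcal B}|\bm\lambda^\top\bm W_{i,j}|=o_p(1)$, so that every denominator lies in $[1/2,3/2]$ with probability tending to $1$. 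I would get this from three ingredients: the standard empirical-likelihood bound $\|\widehat{\bm\lambda}^*\|_2=O_p(N^{-1/2})$ (so the whole ball has $\|\bm\lambda\|_2=O_p(n^{-1/2})$), the moment-induced maximal bound $\max_{1\le s\le N}\|\bm W_s\|_2=o_p(N^{1/\beta})$, and the constraint $K=O(N^{1-2/\beta})$, which yields $n^{-1/2}=O(N^{-1/\beta})$. Together these give $\sup|\bm\lambda^\top\bm W|=O(N^{-1/\beta})\cdot o_p(N^{1/\beta})=o_p(1)$, and this is precisely where the hypothesis $K=O(N^{1-2/\beta})$ is consumed.

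Granting the denominator bound, parts (a) and (c) are routine. For (a), each squared denominator is at most $9/4$, so $\nabla^2 g(\bm\lambda)\succeq \tfrac49\,\frac1N\sum_s\bm W_s\bm W_s^\top$ on the entire ball; since $\frac1N\sum_s\bm W_s\bm W_s^\top$ converges to $\operatorname{Cov}(\bm X_{1,1})$ by the law of large numbers (we are at $\bm\mu=\bm\mu_0$), whose smallest eigenvalue is bounded below, we obtain $\nabla^2 g(\bm\lambda)\succeq\tau\mathbf I_d$ with a fixed $\tau>0$. For (c), the third derivative is bounded in operator norm by a constant multiple of $\frac1N\sum_s\|\bm W_s\|_2^3$ on the ball (again via the denominator bound), and $\frac1N\sum_s\|\bm W_s\|_2^3\to\operatorname{E}\|\bm W_1\|_2^3<\infty$ (finite as $\beta\ge4\ge3$); the mean-value form of Taylor's theorem then gives the Lipschitz bound with $M$ a fixed multiple of $\operatorname{E}\|\bm W_1\|_2^3$.

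The core is the homogeneity bound (b). Both $\nabla^2 g_i(\bm\lambda)$ and $\nabla^2 g(\bm\lambda)$ are empirical averages of the same i.i.d.\ matrix $h(\bm\lambda;\bm W)=\bm W\bm W^\top/(1+\bm\lambda^\top\bm W)^2$, so their common population mean cancels and it suffices to control $\max_i\|\nabla^2 g_i(\bm\lambda)-\operatorname{E} h(\bm\lambda;\bm W_{1,1})\|_2$ uniformly over the ball. I would first replace the supremum over $\bm\lambda$ by finitely many representative points: a crude per-machine Hessian-Lipschitz bound of order $N^{1/\beta}$ (valid on the good event) forces a net of polynomial size, but since $d$ is fixed and $\log K$ is of the same order as $\log N$, the resulting union factor is only $O(\log K)$. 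It then remains to bound, at each fixed point, the maximum over the $K$ machines of an average of $n$ i.i.d.\ matrices. Here I would truncate at a level $L$, splitting $h$ into $h\,\mathbf{1}(\|\bm W\|_2\le L)$ and its complement: for the truncated part the summands are bounded by a multiple of $L^2$ with variance of order $\operatorname{E}\|\bm W_1\|_2^4<\infty$ (this is exactly where $\beta\ge4$ enters, giving the borderline finite-variance case), so matrix Bernstein plus a union bound over the $K$ machines yields the Gaussian rate $O(n^{-1/2}\sqrt{\log K})$; the truncation bias is deterministic and cancels between $g_i$ and $g$, and the discarded tail is handled through its moments.

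The hard part is calibrating $L$ so that the sub-Gaussian Bernstein regime is genuinely active at the target scale while the tail contribution stays below $n^{-1/2}\sqrt{\log K}$, simultaneously for all $K=O(N^{1-2/\beta})$ machines. This is a tight balance: a larger $L$ improves the tail but lets the bounded-difference term $L^2\log K/n$ dominate the variance term in Bernstein, while a smaller $L$ does the reverse, and because the $\bm W\bm W^\top$ summands carry only $\beta/2$ finite moments the maximum over the $K$ machines is the binding constraint. Verifying that the joint hypotheses $\beta\ge4$ and $K=O(N^{1-2/\beta})$ are exactly what make the Fuk--Nagaev polynomial tail, summed over the $K$ machines, stay at the claimed rate is the step I expect to demand the most care, and it is where all of the lemma's assumptions are exploited at once.
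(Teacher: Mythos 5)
Your foundational step and part (a) coincide with the paper's own argument: the paper isolates $\max_{i,j}|\bm\lambda^\top(\bm X_{i,j}-\bm\mu)| = o_p(1)$ on the ball as a standalone lemma, proved from Owen's moment bound $\max_{s}\|\bm X_s-\bm\mu\|_2 = o_p(N^{1/\beta})$, the rate $\|\widehat{\bm\lambda}^*\|_2 = O_p(N^{-1/2})$, and $K=O(N^{1-2/\beta})$, and then obtains (a) by bounding the denominators and combining the law of large numbers with the eigenvalue condition. For part (c) you take a genuinely different but valid route: a direct operator-norm bound on $\nabla^3 g$ by a multiple of $N^{-1}\sum_s \|\bm W_s\|_2^3$ (finite since $\beta\ge 4>3$, writing $\bm W=\bm X-\bm\mu$ as you do) followed by the mean-value theorem, whereas the paper introduces the auxiliary averages $f_{a,b}(\bm\lambda)$ of $|(\bm W)_a(\bm W)_b|/(1+\bm\lambda^\top\bm W)^2$, verifies they are convex with probability tending to one, and invokes local Lipschitz continuity of convex functions. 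Your version is more transparent and sidesteps the paper's elementwise bookkeeping.

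Part (b) is where the proposal has a genuine gap, and it sits exactly in the step you defer as "demanding the most care": the calibration cannot be made to work at the lemma's stated generality. The Hessian summands $\bm W\bm W^\top/(1+\bm\lambda^\top\bm W)^2$ have only $\beta/2$ finite moments, so at the target threshold $t \asymp n^{-1/2}\sqrt{\log K}$ the Fuk--Nagaev polynomial term for a single machine is of order $n\operatorname{E}|h|^{\beta/2}/(nt)^{\beta/2} \asymp n^{1-\beta/4}(\log K)^{-\beta/4}$; the union bound multiplies this by $K$, and since $K=O(N^{1-2/\beta})$ with $N=nK$ permits $K\asymp n^{\beta/2-1}$, the product is of order $(n/\log K)^{\beta/4}\to\infty$ --- already at the borderline $\beta=4$ with $K\asymp n$ it is $n/\log n$. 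The truncation variant fails in the same way: the Bernstein bounded term forces $L\lesssim n^{1/4}$ (up to logarithms), while controlling the maximum over $K$ machines of the tail averages (which cancel only in expectation, not pathwise) via Markov plus a union bound requires $L^{\beta-2}\gtrsim Kn^{1/2}$; these are compatible only when $K\ll n^{(\beta-4)/4}$, far inside the permitted range. The paper's actual proof of (b) avoids finite-sample concentration entirely: for fixed $\bm\lambda$ it applies the CLT entrywise to $\sqrt n\,[(\nabla^2 g_i(\bm\lambda))_{a,b}-\operatorname{E}(\nabla^2 g(\bm\lambda))_{a,b}]$ (finite variance is precisely where $\beta\ge 4$ enters, as you noted), then invokes its maximal lemma for $K$ independent asymptotically normal sequences --- proved through an asymptotic Hoeffding-type bound --- to get $\max_i \|\nabla^2 g_i - \operatorname{E}\nabla^2 g_i\|_2 = O_p(n^{-1/2}\sqrt{\log K})$, and finishes with $\|\nabla^2 g - \operatorname{E}\nabla^2 g\|_2=O_p(N^{-1/2})$ and the triangle inequality through the common expectation. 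Whether that conversion of per-machine asymptotic normality into uniform-over-$K$ Gaussian tails is itself airtight is a separate question --- your computation in effect shows that this is where the genuine difficulty lives --- but it is the paper's route, and your proposal contains no working substitute for it.
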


The strong convexity property ensures that the global optimal function $g(\bm \lambda)$ has a unique minimizer.
The homogeneity property ensures that the difference between $R(\bm\lambda;\widehat{\bm\lambda}_0)$ and $R_i(\bm\lambda;\widehat{\bm\lambda}_0)$ is small sufficiently such that $\widehat{\bm\lambda}_T$ can be sufficiently close to $\widehat{\bm\lambda}^*$.
The smoothness of hessian ensures that Algorithm \ref{alg:EL} has a faster contraction rate.

\begin{theorem}\label{thm:DEL}
Under the conditions of Lemma \ref{Lemma:assumption}, if $\|\widehat{\bm\lambda}_0 - \widehat{\bm\lambda}^*\|_2 = O_p(n^{-1/2})$ and $T \geq \lfloor \log K/\log n \rfloor + 1$, we have

(a) $ \|\widehat{\bm\lambda}_T - \widehat{\bm\lambda}^*\|_2=o_p(N^{-1/2});$

(b) $\ell_{DEL}(\bm \mu_0) \stackrel{d}\rightarrow \chi_{(d)}^2$ as $n\to\infty$,  where  $\ell_{DEL}(\bm\mu_0)$ is defined by \eqref{ellDEL} with $\bm\mu=\bm \mu_0$ and ``$\overset{d}{\to}$'' denotes convergence in distribution.
\end{theorem}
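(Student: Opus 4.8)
The plan is to prove part (a) first by establishing a one-step contraction inequality for the distributed iterates, then to deduce part (b) by a standard empirical-likelihood expansion once the Lagrange multiplier is known to be accurate enough.

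For part (a), I would analyze a single round of Algorithm~\ref{alg:EL}. Writing the optimality condition for $\widehat{\bm\lambda}_{t+1}^{(i)}$, namely $\nabla g_i(\widehat{\bm\lambda}_{t+1}^{(i)}) + \nabla g(\widehat{\bm\lambda}_t) - \nabla g_i(\widehat{\bm\lambda}_t) = \bm 0$, and the optimality condition for the global minimizer, $\nabla g(\widehat{\bm\lambda}^*) = \bm 0$, I would subtract the two and apply the mean-value/fundamental-theorem-of-calculus identity to express the difference $\widehat{\bm\lambda}_{t+1}^{(i)} - \widehat{\bm\lambda}^*$ in terms of integrated Hessians acting on $\widehat{\bm\lambda}_t - \widehat{\bm\lambda}^*$. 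The key algebraic cancellation is that averaging over $i$ makes the gradient terms collapse back to the global gradient, so the leading error in $\widehat{\bm\lambda}_{t+1} - \widehat{\bm\lambda}^*$ is governed by the Hessian mismatch $\nabla^2 g_i - \nabla^2 g$ and by the second-order smoothness of $g$. Using Lemma~\ref{Lemma:assumption}(a) to invert the Hessian (strong convexity gives the factor $\tau^{-1}$), Lemma~\ref{Lemma:assumption}(b) to bound the homogeneity gap by $C_2 n^{-1/2}\sqrt{\log K}$, and Lemma~\ref{Lemma:assumption}(c) to control the Lipschitz remainder by $M\|\widehat{\bm\lambda}_t - \widehat{\bm\lambda}^*\|_2$, I expect a recursion of the form
\begin{equation*}
\|\widehat{\bm\lambda}_{t+1} - \widehat{\bm\lambda}^*\|_2 \le C\,n^{-1/2}\sqrt{\log K}\,\|\widehat{\bm\lambda}_t - \widehat{\bm\lambda}^*\|_2 + C'\,\|\widehat{\bm\lambda}_t - \widehat{\bm\lambda}^*\|_2^2
\end{equation*}
with high probability. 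Iterating from $\|\widehat{\bm\lambda}_0 - \widehat{\bm\lambda}^*\|_2 = O_p(n^{-1/2})$ and verifying inductively that the iterates stay inside the ball $\mathcal B(\widehat{\bm\lambda}^*, C_1 n^{-1/2})$ so the lemma keeps applying, each round multiplies the error by a factor of order $n^{-1/2}$, so after $T$ rounds the error is of order $n^{-(T+1)/2}$ up to logarithmic factors. The stated choice $T \ge \lfloor \log K/\log n\rfloor + 1$ is exactly what forces $n^{-(T+1)/2}\sqrt{(\log K)^{T}} = o(N^{-1/2})$ given $K = O(N^{1-2/\beta})$, yielding $\|\widehat{\bm\lambda}_T - \widehat{\bm\lambda}^*\|_2 = o_p(N^{-1/2})$.

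For part (b), I would start from the classical result that $\ell(\bm\mu_0) = 2\sum_s \log[1 + \widehat{\bm\lambda}^{*\top}(\bm Z_s - \bm\mu_0)] \stackrel{d}{\to} \chi^2_{(d)}$, which holds for the exact Lagrange multiplier by \citet{owen2001empirical}. The task is then to show $\ell_{DEL}(\bm\mu_0) - \ell(\bm\mu_0) = o_p(1)$. I would Taylor-expand the difference of the two log-likelihood objectives in $\bm\lambda$ around $\widehat{\bm\lambda}^*$; since $\widehat{\bm\lambda}^*$ is the exact minimizer, the first-order term vanishes, leaving a quadratic-in-$(\widehat{\bm\lambda}_T - \widehat{\bm\lambda}^*)$ leading term. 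Because $\ell_{DEL}$ carries a factor of $N = nK$ relative to the per-observation functions, the difference scales like $N \cdot \|\widehat{\bm\lambda}_T - \widehat{\bm\lambda}^*\|_2^2$ times a bounded Hessian, and part~(a) guarantees $N \cdot o_p(N^{-1/2})^2 = o_p(1)$. The main obstacle I anticipate is the bookkeeping in part~(a): carefully propagating the high-probability events from Lemma~\ref{Lemma:assumption} through all $T$ rounds while keeping every iterate inside $\mathcal B(\widehat{\bm\lambda}^*, C_1 n^{-1/2})$, and tracking how the accumulated $\sqrt{\log K}$ factors interact with the moment-driven constraint $K = O(N^{1-2/\beta})$ so that the final rate genuinely beats $N^{-1/2}$.
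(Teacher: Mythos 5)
Your proposal is correct and takes essentially the same route as the paper: the paper obtains your one-round contraction inequality (with per-round factor of order $n^{-1/2}\sqrt{\log K}$, coming from the strong convexity, homogeneity, and Hessian-smoothness properties of Lemma \ref{Lemma:assumption}) by invoking the argument of Theorem 3.2 of \citet{fan2021communication} rather than re-deriving it from the optimality conditions, and then iterates it under $T \geq \lfloor \log K/\log n \rfloor + 1$ exactly as you do. Your part (b) also matches the paper's proof: a Taylor expansion of $\ell_{DEL}(\bm\mu_0)$ around $\widehat{\bm\lambda}^*$ (where the first-order term vanishes, so the gap is $N\cdot o_p(N^{-1/2})^2 = o_p(1)$), combined with Owen's classical $\chi^2_{(d)}$ limit for the exact Lagrange multiplier.
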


It is worth noting that if $K\le n$, $ \hat{ \lambda}_T$ asymptotically achieves the desired accuracy for $T=2$ according to Theorem \ref{thm:DEL}, which ensures that $\ell_{DEL}(  \mu_0)$ is asymptotically standard chi-squared. Therefore, $\ell_{DEL}(\mu_0)$ can be applied to testing the null hypothesis ``H$_0$: $\mu=\mu_0$''.
H$_0$ is rejected if $\ell_{DEL}(\mu_0) > \chi^2_{(d)}(\alpha)$ at $\alpha$ level, where $0<\alpha<1$ and $\chi^2_{(d)}(\alpha)$ is the $\alpha$ quantile of the standard chi-squared distribution with $d$ degrees of freedom. Moreover, $\ell_{DEL}(\mu)$ can be used to construct $1-\alpha$ confidence region (interval) $\{\mu: \ell_{DEL}(\mu) \le \chi^2_{(d)}(\alpha)\}$.

\section{Distributed Empirical Likelihood Inference With Byzantine Failures}\label{Sec:RDEL}

An important operation in the proposed DEL algorithm is that the central machine receives the transmitted gradient information from worker machines and then aggregates the local gradients by taking the average.
However, in practice, the information sent from a worker machine can be arbitrarily erroneous due to hardware or software breakdowns, data crashes, or communication failures \citep{tu2021variance}.
The success of Algorithm \ref{alg:EL} is based on the fact that the local gradients from all worker machines obtained through transmission are correct, which cannot be guaranteed if Byzantine failures occur.
For these reasons, a direct application of the proposed DEL method may lead to a biased calculation of the Lagrange multiplier. Consequently, the distributed empirical log-likelihood ratio statistic may not be asymptotically standard chi-squared.
Therefore, it is desirable to develop DEL in the presence of Byzantine failures. To this end, we first identify the machines without Byzantine failures and then apply the proposed DEL method to the machines without Byzantine failures.

In what follows, let us identify the Byzantine machines first by analyzing the local gradient since the gradient is transmitted for the distribution system.
For $i=1,\ldots,K$, by Taylor's expansion of $\nabla g_i(\bm{\lambda};\bm\mu_0)$ at $\bm\lambda = \bm 0$, we have
$$
\nabla g_i(\bm{\lambda};\bm\mu_0) = -\frac{1}{n}\sum_{j=1}^{n} (\bm X_{i,j} -\bm{\mu}_0) + \frac{1}{n} \sum_{j=1}^{n} (\bm X_{i,j} - \bm{\mu}_0) (\bm X_{i,j} - \bm{\mu}_0)^\top \bm{\lambda} + \bm R_i'(\bm\lambda;\bm\mu_0),
$$
where $\bm R_i'(\bm\lambda;\bm\mu_0) = \sum_{k=2}^{\infty} \nabla^k g_i(\bm 0;\bm\mu_0) \bm\lambda^{\otimes k}/k!$ is the higher-order reminder.
From Taylor's expansion, one of the main reasons for Byzantine failures may be that the sample mean in a machine does not converge to $\bm\mu_0$ because of data crashes or heterogeneity (In some practical problems, for example, different machines have different data sources for problems of multiple data sources).
Other case that may lead to Byzantine failures is that some machines may transfer wrong gradient directly.
This motivates us to consider the norm of the difference between gradients from any two worker machines as the measure for detecting the Byzantine machines.
We first consider Byzantine failures for the former case, and then extend the idea to the latter case.

For the case that the sample mean in Byzantine machines do not converge to $\bm\mu_0$, let $\mathcal S^*$ denote the index set of the worker machines without Byzantine failures.
For any fixed $i\in \mathcal S^*$, it is easy to see $\sqrt{n} \|\nabla g_i(\bm{\lambda};\bm\mu_0) \|_2 =O_p(1)$ with $\|\bm\lambda\|_2=O_p(n^{-1/2})$.
For any fixed $i \notin \mathcal{S}^*$, we have $\operatorname{E}(\bm X_{i,1})\neq \bm \mu_0$ and thus $\sqrt{n} \|\nabla g_i(\bm{\lambda};\bm\mu_0)\|_2 \to \infty$ as $n\to \infty$.
The properties of $\| \nabla g_i(\bm\lambda;\bm\mu_0)  - \nabla g_{i'}(\bm\lambda;\bm\mu_0) \|_2$ are given in Lemma \ref{lemma:norm-diff} for $i, i' \in\{1,\ldots,K\}$ and $i\neq i'$.

\begin{lemma}\label{lemma:norm-diff}
Assume the eigenvalues of $\operatorname{Cov}(\bm X_{i,1})$ are uniformly bounded away from zero and infinity for $i=1,\ldots,K$, and $K=O(N^{1-2/\beta})$ for $\beta \ge 2$.
Given $\widehat{\bm\lambda}_0$ satisfying $\| \widehat{\bm\lambda}_0 \|_2 = O_p(n^{-1/2})$, we have

(a) $\max_{i, i'\in \mathcal{S}^*, i\neq i'}\| \nabla g_i(\widehat{\bm\lambda}_0;\bm\mu_0)  - \nabla g_{i'}(\widehat{\bm\lambda}_0;\bm\mu_0)\|_2 = O_p(n^{-1/2}\sqrt{\log K})$;

(b) $\Pr \left( \min_{i\in\mathcal{S}^*,i'\notin \mathcal{S}^*} \| \nabla g_i(\widehat{\bm\lambda}_0;\bm\mu_0)  - \nabla g_{i'}(\widehat{\bm\lambda}_0;\bm\mu_0)\|_2 \ge C_n \right) \to 1$, as $n\to\infty$, for some finite constant $C_n>0$ satisfying $(\log K)^{-1/2} \sqrt{n} \leftsecond \min_{i'\notin \mathcal{S} ^*} \| \operatorname{E}(\bm X_{i',1}) - \bm\mu_0 \|_2 - C_n \rightsecond \to \infty$ as $n \to \infty$.
\end{lemma}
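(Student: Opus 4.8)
The plan is to reduce both statements to the asymptotic behaviour of the leading linear term in the Taylor expansion of the gradient, namely the centred sample mean $\bar{\bm X}_i - \bm\mu_0$ with $\bar{\bm X}_i := n^{-1}\sum_{j=1}^n \bm X_{i,j}$, and then to prove a uniform maximal inequality for these means across the $K$ machines. Writing $\bm\mu_i := \operatorname{E}(\bm X_{i,1})$ (so $\bm\mu_i = \bm\mu_0$ exactly when $i\in\mathcal S^*$) and evaluating the expansion already displayed in the text at $\bm\lambda=\widehat{\bm\lambda}_0$,
$$
\nabla g_i(\widehat{\bm\lambda}_0;\bm\mu_0) = -(\bar{\bm X}_i - \bm\mu_0) + \Big[\tfrac{1}{n}\sum_{j=1}^n (\bm X_{i,j}-\bm\mu_0)(\bm X_{i,j}-\bm\mu_0)^\top\Big]\widehat{\bm\lambda}_0 + \bm R_i'(\widehat{\bm\lambda}_0;\bm\mu_0),
$$
I would first show that, uniformly in $i$, the quadratic term is $O_p(n^{-1/2})$ and the remainder $\bm R_i'$ is of smaller order still, hence both are dominated by the $n^{-1/2}\sqrt{\log K}$ scale targeted in (a). The quadratic term is controlled because the sample second-moment matrices have uniformly bounded operator norm (from the covariance assumption plus a maximal bound on their fluctuation) while $\|\widehat{\bm\lambda}_0\|_2 = O_p(n^{-1/2})$; the higher-order remainder is controlled by $\|\widehat{\bm\lambda}_0\|_2^2$ together with the moment condition, after truncating the observations so that $1+\widehat{\bm\lambda}_0^\top(\bm X_{i,j}-\bm\mu_0)$ stays bounded away from zero. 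After these reductions, $\nabla g_i(\widehat{\bm\lambda}_0;\bm\mu_0) - \nabla g_{i'}(\widehat{\bm\lambda}_0;\bm\mu_0)$ equals $(\bar{\bm X}_{i'}-\bm\mu_{i'})-(\bar{\bm X}_i-\bm\mu_i) + (\bm\mu_{i'}-\bm\mu_i)$ up to uniformly negligible terms.

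The core of (a) is the maximal inequality $\max_{i\in\mathcal S^*} \|\bar{\bm X}_i - \bm\mu_0\|_2 = O_p\big(n^{-1/2}\sqrt{\log K}\big)$, from which the stated $\max_{i,i'\in\mathcal S^*}$ bound follows by the triangle inequality (here $\bm\mu_i=\bm\mu_{i'}=\bm\mu_0$, so the deterministic shift vanishes). I would prove this by a union bound over the at most $K$ machines, reducing it to a single per-machine tail $\Pr(\sqrt n\,\|\bar{\bm X}_1 - \bm\mu_0\|_2 > t\sqrt{\log K})$ controlled at level $K^{-(1+\epsilon)}$. Since only a $\beta$-th moment is available rather than exponential moments, the per-machine bound is obtained by truncation: split each summand at a level $a_n$, apply a Bernstein/Fuk--Nagaev inequality to the bounded part, and bound the tail and truncation-bias contributions using $\operatorname{E}\|\bm X_{1,1}\|_2^\beta<\infty$ together with a union bound over all $N=nK$ observations. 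The assumption $K=O(N^{1-2/\beta})$ (equivalently $K=O(n^{(\beta-2)/2})$, so that $\log K = O(\log n)$) is precisely what makes this calibration succeed and delivers the $\sqrt{\log K}$ rate.

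For (b), I would isolate the deterministic mean shift of the Byzantine machine through the decomposition
$$
\nabla g_i(\widehat{\bm\lambda}_0;\bm\mu_0) - \nabla g_{i'}(\widehat{\bm\lambda}_0;\bm\mu_0) = (\bm\mu_{i'} - \bm\mu_0) + \big[(\bar{\bm X}_{i'}-\bm\mu_{i'}) - (\bar{\bm X}_i - \bm\mu_0)\big] + (\text{negligible}).
$$
Applying the maximal inequality of (a) to $\mathcal S^*$ and to its complement gives $\max_{i\in\mathcal S^*}\|\bar{\bm X}_i-\bm\mu_0\|_2 \vee \max_{i'\notin\mathcal S^*}\|\bar{\bm X}_{i'}-\bm\mu_{i'}\|_2 = O_p(n^{-1/2}\sqrt{\log K})$, so the reverse triangle inequality yields
$$
\min_{i\in\mathcal S^*, i'\notin\mathcal S^*}\|\nabla g_i(\widehat{\bm\lambda}_0;\bm\mu_0) - \nabla g_{i'}(\widehat{\bm\lambda}_0;\bm\mu_0)\|_2 \ge \min_{i'\notin\mathcal S^*}\|\bm\mu_{i'}-\bm\mu_0\|_2 - O_p(n^{-1/2}\sqrt{\log K}).
$$
The defining condition $(\log K)^{-1/2}\sqrt n\,[\min_{i'\notin\mathcal S^*}\|\operatorname{E}(\bm X_{i',1})-\bm\mu_0\|_2 - C_n]\to\infty$ states exactly that the separation exceeds $C_n$ by an amount dominating the $n^{-1/2}\sqrt{\log K}$ fluctuation, so the right-hand side is $\ge C_n$ with probability tending to one.

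The main obstacle I anticipate is the maximal inequality under only polynomial moments: the truncation level $a_n$ must be chosen so that the tail and bias contributions summed over all $N$ observations are asymptotically negligible while the Bernstein bound for the truncated part still behaves sub-Gaussianly up to the level $t\sqrt{\log K}$, and verifying that $K=O(N^{1-2/\beta})$ renders these two requirements compatible is the delicate calibration. A secondary technical point is the uniform control of the remainder $\bm R_i'$ near the boundary where $\|\widehat{\bm\lambda}_0\|_2\,\max_{i,j}\|\bm X_{i,j}-\bm\mu_0\|_2$ need not be small; this again relies on discarding the largest observations via truncation and invoking the moment bound to show the exceptional event is negligible.
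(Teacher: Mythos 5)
Your overall architecture matches the paper's: Taylor-expand the gradient so that everything reduces to centred sample means, prove a maximal inequality of order $n^{-1/2}\sqrt{\log K}$ across machines for (a), and combine the separation condition with a reverse triangle inequality for (b). Your route to the maximal inequality differs --- truncation plus a Bernstein/Fuk--Nagaev bound and a union bound, versus the paper's reduction to per-machine asymptotic normality followed by a Gaussian-type tail bound and a product-over-independent-machines argument (its auxiliary lemma on $\max_i\|\bm V_{in}\|_2$) --- and your version is arguably the more self-contained of the two; that difference alone would be acceptable, and part (a) of your argument is sound.

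Part (b), however, has a genuine gap. You assert that, uniformly in $i$, the quadratic term $\bigl[n^{-1}\sum_{j=1}^n(\bm X_{i,j}-\bm\mu_0)(\bm X_{i,j}-\bm\mu_0)^\top\bigr]\widehat{\bm\lambda}_0$ is $O_p(n^{-1/2})$ because the sample second-moment matrices have uniformly bounded operator norm ``from the covariance assumption.'' That is true only for $i\in\mathcal S^*$. For a Byzantine machine $i'\notin\mathcal S^*$, writing $\bm\Delta_{i'}=\operatorname{E}(\bm X_{i',1})-\bm\mu_0$, the second-moment matrix about $\bm\mu_0$ concentrates around $\mathbf\Sigma_{i'}+\bm\Delta_{i'}\bm\Delta_{i'}^\top$; the covariance assumption bounds $\mathbf\Sigma_{i'}$ but puts no bound at all on $\|\bm\Delta_{i'}\|_2$, which the lemma explicitly allows to diverge. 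Hence for such machines the quadratic term is of order $\|\bm\Delta_{i'}\|_2^2\,n^{-1/2}$, and your ``negligible'' label fails once $\|\bm\Delta_{i'}\|_2$ is of order $\sqrt n$ or larger: the deterministic-signal combination becomes $-\bm\Delta_{i'}+\bm\Delta_{i'}\bm\Delta_{i'}^\top\widehat{\bm\lambda}_0=\bm\Delta_{i'}\bigl(\bm\Delta_{i'}^\top\widehat{\bm\lambda}_0-1\bigr)$, which could in principle collapse to zero when $\bm\Delta_{i'}^\top\widehat{\bm\lambda}_0\approx 1$, destroying the lower bound in (b). The paper's proof devotes a separate case analysis (its sets $\mathcal J$, $\mathcal D_1$, $\mathcal D_2$) precisely to rule out this cancellation, showing that the event $\bm\Delta_{i'}^\top\widehat{\bm\lambda}_0\approx 1$ has vanishing probability uniformly over the offending machines; your argument silently assumes this case away. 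A similar caveat applies to your control of the remainder $\bm R_i'$ for $i'\notin\mathcal S^*$: its size is driven by $|\widehat{\bm\lambda}_0^\top(\bm X_{i',j}-\bm\mu_0)|$, which contains the non-vanishing deterministic part $\widehat{\bm\lambda}_0^\top\bm\Delta_{i'}$, so it cannot be bounded by $\|\widehat{\bm\lambda}_0\|_2^2$ times a moment alone. If you restrict to the initializer $\widehat{\bm\lambda}_0=\bm 0$ actually used in the machine selection algorithm --- where the quadratic and remainder terms vanish identically --- your proof is complete and coincides with the paper's simpler branch; but for a general $\widehat{\bm\lambda}_0$ with $\|\widehat{\bm\lambda}_0\|_2=O_p(n^{-1/2})$, which is what both you and the lemma state, the divergent-mean case must be handled.
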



We assume $|\mathcal{S}^*|>K/2$ in the machine selection procedure, which is the same as the majority rule widely adopted in the invalid instrument literature \citep{kang2016instrumental,windmeijer2019use}.
Motivated by Lemma \ref{lemma:norm-diff}, we develop a machine selection algorithm to select the index set $\mathcal{S}^*$.
For $i=1,\ldots,K$, the $i$-th worker machine sends the local gradient $\nabla g_i(\widehat{\bm\lambda}_0;\widetilde{\bm\mu})$ to the central machine, where $\widehat{\bm\lambda}_0$ is an initial estimator, and $\widetilde{\bm\mu}$ is a consistent estimator for $\bm\mu$.
Since the machines without Byzantine failures are unknown, we set $\widehat{\bm\lambda}_0=\bm 0$ and $\widetilde{\bm\mu} = \arg\min_{\bm\mu} \sum_{i=1}^K \| \bm\mu - \bar{\bm\mu}_i \|_2$, where $\bar{\bm\mu}_i = n^{-1} \sum_{j=1}^n \bm X_{i,j}$ is the sample mean of the data stored in the $i$-th machine.
Let $s_i:= \# \{i':\, \| \nabla g_{i'}(\widehat{\bm\lambda}_0;\widetilde{\bm\mu}) - \nabla g_i(\widehat{\bm\lambda}_0;\widetilde{\bm\mu})\|_2 < \gamma_n, i'= 1,\ldots,K \}$ denote the number of $\nabla g_{i'}(\widehat{\bm\lambda}_0;\widetilde{\bm\mu})$ being close to $\nabla g_i(\widehat{\bm\lambda}_0;\widetilde{\bm\mu})$ for $i=1,\ldots,K$, where $\gamma_n>0$ is a pre-specified threshold.
If $s_i> K/2$, the $i$-th machine is selected as the machine without Byzantine failures and thus the index set $\mathcal S^*$ can be selected as $\mathcal{S} := \left\{i:\,s_i> K/2,i=1,\ldots,K\right\}$.
The assumption $|\mathcal S^*|>K/2$ ensures the success of the selection procedure even though the norm of the difference between two gradients from Byzantine machines may be smaller than $\gamma_n$.
The machine selection procedure is described in Algorithm \ref{alg:Choice} and the selection consistency property is summarized in Theorem \ref{thm:cluster}.

\begin{algorithm}[htbp]
\caption{Machine selection algorithm}
\label{alg:Choice}
\begin{algorithmic}[1]
\Require
An initial estimator $\widehat{\bm\lambda}_0$, a consistent estimator $\widetilde{\bm\mu}$, the threshold $\gamma_n$.

\State For $i=1,\ldots,K$, the $i$-th worker machine computes $\nabla g_i(\widehat{\bm\lambda}_0;\widetilde{\bm\mu})$ and sends it to the central machine.

\State Compute $s_i:= \# \{i':\, \| \nabla g_{i'}(\widehat{\bm\lambda}_0;\widetilde{\bm\mu}) - \nabla g_i(\widehat{\bm\lambda}_0;\widetilde{\bm\mu}) \|_2 < \gamma_n, i' = 1,\ldots,K \}$ for $i=1,\ldots,K$ and let $\mathcal S=\left\{i:\, s_i> K/2,i=1,\ldots,K\right\}$.

\Ensure
$\mathcal S$.
\end{algorithmic}
\end{algorithm}

\begin{theorem}\label{thm:cluster}
Under the conditions of Lemma \ref{lemma:norm-diff} with $C_n$ replaced by $\gamma_n$, further assume $\gamma_n\sqrt{n}/\sqrt{\log K}\to\infty$ as $n\to \infty$. If $|\mathcal{S}^{*}|> K/2$, the selection of $\mathcal{S}$ is consistent, that is,
$$
\Pr(\mathcal{S}=\mathcal{S}^{*})\to 1,~\text{as}~ n\to\infty,
$$
where $\mathcal{S}$ is defined in Algorithm \ref{alg:Choice}.
\end{theorem}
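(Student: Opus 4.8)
The plan is to reduce the statement to a deterministic counting argument that holds on a high-probability ``separation event'' furnished by Lemma \ref{lemma:norm-diff}. The first point to settle is that the quantities actually compared in Algorithm \ref{alg:Choice} involve the estimated centre $\widetilde{\bm\mu}$, whereas Lemma \ref{lemma:norm-diff} is phrased in terms of the true $\bm\mu_0$. I would dispose of this gap at the outset by using the prescribed choice $\widehat{\bm\lambda}_0=\bm 0$, for which $\nabla g_i(\bm 0;\bm\mu)=\bm\mu-\bar{\bm\mu}_i$, so that for every pair $(i,i')$ the pairwise difference
\[
\nabla g_{i'}(\bm 0;\bm\mu)-\nabla g_i(\bm 0;\bm\mu)=\bar{\bm\mu}_i-\bar{\bm\mu}_{i'}
\]
is \emph{independent} of the location argument. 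Hence the differences computed with $\widetilde{\bm\mu}$ coincide exactly with those appearing in Lemma \ref{lemma:norm-diff} evaluated at $\bm\mu_0$, and the lemma transfers verbatim to the statistics $\|\nabla g_{i'}(\bm 0;\widetilde{\bm\mu})-\nabla g_i(\bm 0;\widetilde{\bm\mu})\|_2$ used to form each $s_i$.

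Next I would introduce the separation event
\[
A_n=\Bigl\{\max_{i,i'\in\mathcal{S}^*,\,i\neq i'}\|\bar{\bm\mu}_i-\bar{\bm\mu}_{i'}\|_2<\gamma_n\Bigr\}\cap\Bigl\{\min_{i\in\mathcal{S}^*,\,i'\notin\mathcal{S}^*}\|\bar{\bm\mu}_i-\bar{\bm\mu}_{i'}\|_2\ge\gamma_n\Bigr\}
\]
and show $\Pr(A_n)\to 1$. The first half follows from Lemma \ref{lemma:norm-diff}(a): the within-$\mathcal{S}^*$ maximum is $O_p(n^{-1/2}\sqrt{\log K})$, which is $o_p(\gamma_n)$ precisely because of the extra hypothesis $\gamma_n\sqrt{n}/\sqrt{\log K}\to\infty$. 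The second half is exactly Lemma \ref{lemma:norm-diff}(b) with $C_n$ taken to be $\gamma_n$; the accompanying condition $(\log K)^{-1/2}\sqrt{n}\,[\min_{i'\notin\mathcal{S}^*}\|\operatorname{E}(\bm X_{i',1})-\bm\mu_0\|_2-\gamma_n]\to\infty$, inherited from ``$C_n$ replaced by $\gamma_n$,'' is what keeps the good--bad minimum above the threshold with probability tending to one. Intersecting two events of probability tending to $1$ preserves $\Pr(A_n)\to 1$.

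On $A_n$ I would then argue $\mathcal{S}=\mathcal{S}^*$ deterministically, examining $s_i$ machine by machine. If $i\in\mathcal{S}^*$, every $i'\in\mathcal{S}^*$ (including $i$ itself) satisfies $\|\bar{\bm\mu}_i-\bar{\bm\mu}_{i'}\|_2<\gamma_n$, so all of $\mathcal{S}^*$ is counted and $s_i\ge|\mathcal{S}^*|>K/2$, whence $i\in\mathcal{S}$. If instead $i\notin\mathcal{S}^*$, the second event in $A_n$ forces $\|\bar{\bm\mu}_i-\bar{\bm\mu}_{i'}\|_2\ge\gamma_n$ for \emph{every} $i'\in\mathcal{S}^*$, so no index of $\mathcal{S}^*$ contributes to $s_i$; hence $s_i\le K-|\mathcal{S}^*|<K/2$, and $i\notin\mathcal{S}$. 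The majority hypothesis $|\mathcal{S}^*|>K/2$ is used twice and is exactly what places the two counts on opposite sides of $K/2$; notably, no control over differences between two Byzantine machines is needed. Thus $\mathcal{S}=\mathcal{S}^*$ on $A_n$, giving $\Pr(\mathcal{S}=\mathcal{S}^*)\ge\Pr(A_n)\to 1$.

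I expect the only genuine subtlety to be the reconciliation of $\widetilde{\bm\mu}$ with $\bm\mu_0$ in the first step: once the cancellation at $\widehat{\bm\lambda}_0=\bm 0$ is noticed, the estimator $\widetilde{\bm\mu}$ becomes irrelevant to the pairwise differences and the lemma applies directly, after which everything is a purely combinatorial consequence of the separation event and the majority rule. Were one to run the selection at a nonzero $\widehat{\bm\lambda}_0$, this step would instead demand a Taylor expansion of $\nabla g_i$ in its second argument together with a convergence rate for $\widetilde{\bm\mu}-\bm\mu_0$, bounding the resulting perturbation by $o_p(\gamma_n)$ — which is where the real work would lie.
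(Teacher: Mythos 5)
Your proof is correct, and its skeleton---a high-probability separation event followed by a deterministic majority count---matches the paper's proof of Theorem \ref{thm:cluster}. Where you genuinely depart from the paper is in reconciling the estimated centre $\widetilde{\bm\mu}$ used by Algorithm \ref{alg:Choice} with the true $\bm\mu_0$ appearing in Lemma \ref{lemma:norm-diff}. The paper handles this by perturbation: it invokes an external result (Lemma S1 in the supplement of \cite{wang2023robust}) to obtain $\|\widetilde{\bm\mu}-\bm\mu_0\|_2=O_p(n^{-1/2}\sqrt{\log K})$ under the majority hypothesis, Taylor-expands the gradients in the location argument, and shows the induced change in each pairwise difference is $O_p(n^{-1/2}\sqrt{\log K})$, hence negligible relative to $\gamma_n$. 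You instead observe that at the prescribed initialization $\widehat{\bm\lambda}_0=\bm 0$ one has $\nabla g_i(\bm 0;\bm\mu)=\bm\mu-\bar{\bm\mu}_i$, so every pairwise difference equals $\bar{\bm\mu}_i-\bar{\bm\mu}_{i'}$ exactly, with no dependence on the location argument at all; the statistics computed at $\widetilde{\bm\mu}$ therefore coincide with those at $\bm\mu_0$, and Lemma \ref{lemma:norm-diff} applies verbatim. Your route is more elementary and self-contained---it needs neither the consistency rate for $\widetilde{\bm\mu}$ nor the Taylor step---and it also tightens the counting stage by making it deterministic on the event $A_n$, rather than the paper's machine-by-machine statements that $\Pr(s_i>K/2)\to 1$ or $\Pr(s_{i'}<K/2)\to 1$. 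What the paper's perturbation argument buys in exchange is robustness to the initialization: it remains valid for any $\widehat{\bm\lambda}_0$ with $\|\widehat{\bm\lambda}_0\|_2=O_p(n^{-1/2})$ (for instance, the minimizer of a known clean machine's local optimal function), whereas your exact cancellation is special to $\widehat{\bm\lambda}_0=\bm 0$---a limitation you correctly flag yourself. Your use of the hypotheses (the condition $\gamma_n\sqrt{n}/\sqrt{\log K}\to\infty$ for the within-$\mathcal{S}^*$ bound, the gap condition with $C_n$ replaced by $\gamma_n$ for the cross bound, and the majority rule $|\mathcal{S}^*|>K/2$ used on both sides of the count) agrees with the paper's.
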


In practice, we recommend taking a conservative threshold $\gamma_n = a n^{-1/2} \log K$ for some constant $a>0$.
Using the proposed machine selection algorithm, we can obtain the index set of worker machines without Byzantine failures, denoted by $\mathcal{S}$.
By replacing the entire data set $\{\bm X_{i,j}:\, i =1,\ldots,K,\, j=1,\ldots,n\}$ with the selected data set $\{\bm X_{i,j}:\, i \in \mathcal{S},\, j=1,\ldots,n\}$, we can apply Algorithm \ref{alg:EL} to calculating the corresponding Lagrange multiplier, denoted by $\widehat{\bm\lambda}_{\mathcal{S}T}$. In the presence of Byzantines failures, the DEL method is applied to the selected worker machines in  $\mathcal{S}$ and thus we call this method the DEL$_\mathcal{S}$ method.
Given $\mathcal{S}$ and $\widehat{\bm\lambda}_{\mathcal{S}T}$, the empirical log-likelihood ratio statistic is constructed by
\begin{eqnarray}
\label{ellRDEL}
\ell_{DEL_{\mathcal{S}}}(\bm\mu)=2\sum_{i\in\mathcal{S}}\sum_{j=1}^{n}\log\leftsecond 1+\widehat{\bm\lambda}_{\mathcal{S}T}^{\top}(\bm X_{i,j}-\bm\mu)\rightsecond.
\end{eqnarray}
The asymptotic distribution of $\ell_{DEL_{\mathcal{S}}}(\bm\mu_0)$ is given in Theorem \ref{thm:RDEL}.

\begin{theorem}\label{thm:RDEL}
Under the conditions of Theorem \ref{thm:cluster}, further assume $\operatorname{E}\| \bm X_{i,1}\|_2^\beta<\infty$ for $i\in \mathcal{S}^*$, $K=O(N^{1-2/\beta})$ for $\beta \ge 4$, and $T \geq \lfloor \log K/\log n \rfloor + 1$, we have

(a) $\|\widehat{\bm\lambda}_{\mathcal{S}T} - \widehat{\bm\lambda}_{\mathcal{S}^*}\|_2=o_p(N^{-1/2}),$
where $\widehat{\bm\lambda}_{\mathcal{S}^*} = \arg\min_{\bm\lambda} \leftsecond |\mathcal{S}^*|^{-1} \sum_{i\in\mathcal{S}^*} g_i(\bm\lambda) \rightsecond$ is the Lagrange multiplier calculated using the data stored in the worker machines without Byzantine failures;

(b) $\ell_{DEL_{\mathcal{S}}}(\bm\mu_0) \stackrel{d}\rightarrow \chi_{(d)}^2$ as $n\to\infty$, where $\ell_{DEL_{\mathcal{S}}}(\bm\mu_0)$ is defined by \eqref{ellRDEL} with $\bm\mu=\bm \mu_0$.
\end{theorem}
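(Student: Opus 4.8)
The plan is to reduce Theorem \ref{thm:RDEL} to Theorem \ref{thm:DEL} through the selection consistency already established in Theorem \ref{thm:cluster}. First I would invoke Theorem \ref{thm:cluster} to get $\Pr(\mathcal{S}=\mathcal{S}^*)\to 1$ as $n\to\infty$. On the event $\mathcal{A}_n := \{\mathcal{S}=\mathcal{S}^*\}$, running Algorithm \ref{alg:EL} on the selected machines is literally running Algorithm \ref{alg:EL} on the fixed index set $\mathcal{S}^*$, so $\widehat{\bm\lambda}_{\mathcal{S}T}=\widehat{\bm\lambda}_{\mathcal{S}^* T}$ and $\ell_{DEL_\mathcal{S}}(\bm\mu_0)=\ell_{DEL_{\mathcal{S}^*}}(\bm\mu_0)$ on $\mathcal{A}_n$, where the subscript $\mathcal{S}^*$ denotes the same quantities computed from the non-Byzantine machines. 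Because $\Pr(\mathcal{A}_n)\to 1$, it suffices to prove the two conclusions for the $\mathcal{S}^*$-subsystem and transfer them off the negligible event $\mathcal{A}_n^c$: for part (a), $\widehat{\bm\lambda}_{\mathcal{S}T}-\widehat{\bm\lambda}_{\mathcal{S}^*}$ equals $\widehat{\bm\lambda}_{\mathcal{S}^* T}-\widehat{\bm\lambda}_{\mathcal{S}^*}$ with probability tending to one and is therefore $o_p(N^{-1/2})$; for part (b), $\Pr(\ell_{DEL_\mathcal{S}}(\bm\mu_0)\le t)=\Pr(\ell_{DEL_{\mathcal{S}^*}}(\bm\mu_0)\le t)+o(1)$ at every continuity point $t$, since the two probabilities differ by at most $\Pr(\mathcal{A}_n^c)=o(1)$.

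It then remains to show, for the subsystem indexed by $\mathcal{S}^*$, that $\|\widehat{\bm\lambda}_{\mathcal{S}^* T}-\widehat{\bm\lambda}_{\mathcal{S}^*}\|_2=o_p(N^{-1/2})$ and $\ell_{DEL_{\mathcal{S}^*}}(\bm\mu_0)\overset{d}{\to}\chi^2_{(d)}$, which is the content of Theorem \ref{thm:DEL} once its hypotheses are verified here. Setting $N^*:=|\mathcal{S}^*|n$, the majority condition $|\mathcal{S}^*|>K/2$ gives $N/2<N^*\le N$, so $N^*\asymp N$ and $o_p((N^*)^{-1/2})=o_p(N^{-1/2})$. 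The retained machine count obeys $|\mathcal{S}^*|\le K=O(N^{1-2/\beta})=O((N^*)^{1-2/\beta})$, and the round count satisfies $\lfloor\log|\mathcal{S}^*|/\log n\rfloor+1\le\lfloor\log K/\log n\rfloor+1\le T$, so the machine-number and round requirements carry over. The moment hypothesis $\operatorname{E}\|\bm X_{i,1}\|_2^\beta<\infty$ for $i\in\mathcal{S}^*$ and the uniform eigenvalue bounds on $\operatorname{Cov}(\bm X_{i,1})$ supply the remaining ingredients. Since the machines in $\mathcal{S}^*$ need only share the common mean $\bm\mu_0$ and need not be identically distributed, I cannot cite Theorem \ref{thm:DEL} verbatim; instead I would re-run the argument of Lemma \ref{Lemma:assumption} for the restricted global optimal function $|\mathcal{S}^*|^{-1}\sum_{i\in\mathcal{S}^*}g_i(\bm\lambda)$ (whose minimizer is $\widehat{\bm\lambda}_{\mathcal{S}^*}$) to recover strong convexity, homogeneity, and Hessian smoothness, exhibit a good initial estimator such as $\arg\min_{\bm\lambda}g_i(\bm\lambda)$ for a fixed $i\in\mathcal{S}^*$ with $\|\widehat{\bm\lambda}_0-\widehat{\bm\lambda}_{\mathcal{S}^*}\|_2=O_p(n^{-1/2})$, and then replay the contraction analysis of Algorithm \ref{alg:EL}.

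The main obstacle is precisely this heterogeneity, and it bites hardest in the $\chi^2_{(d)}$ limit of part (b). The naive worry is that the limiting covariance of $(N^*)^{-1/2}\sum_{i\in\mathcal{S}^*}\sum_{j}(\bm X_{i,j}-\bm\mu_0)$ is the averaged covariance $\bar\Sigma:=\lim|\mathcal{S}^*|^{-1}\sum_{i\in\mathcal{S}^*}\operatorname{Cov}(\bm X_{i,1})$ rather than a single common $\Sigma$. This is harmless because empirical likelihood is internally studentized: the quadratic expansion $\ell_{DEL_{\mathcal{S}^*}}(\bm\mu_0)=N^*\,\bar{\bm Z}^\top S^{-1}\bar{\bm Z}+o_p(1)$ pairs the sample mean with the pooled sample covariance $S\overset{p}{\to}\bar\Sigma$, so $\bar\Sigma$ cancels and the limit is $\chi^2_{(d)}$ regardless of heterogeneity, once a Lindeberg-type central limit theorem supplies asymptotic normality of $(N^*)^{-1/2}\sum(\bm X_{i,j}-\bm\mu_0)$. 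Making this self-studentization rigorous in the diverging-$K$, independent-but-heterogeneous regime — controlling the remainder uniformly so that $\|\widehat{\bm\lambda}_{\mathcal{S}^* T}\|_2=O_p((N^*)^{-1/2})$ and the higher-order terms in \eqref{ellRDEL} are negligible — together with propagating the per-round contraction through the modified local optimal functions, is the technical heart of the proof.
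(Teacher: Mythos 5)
Your proposal is correct and follows essentially the same route as the paper's proof: invoke Theorem \ref{thm:cluster} to get $\Pr(\mathcal{S}=\mathcal{S}^*)\to 1$, transfer the hypotheses to the $\mathcal{S}^*$-subsystem using $K/2<|\mathcal{S}^*|\le K$ (so that $|\mathcal{S}^*|\,n \asymp N$, hence $o_p(|\mathcal{S}^*|^{-1/2}n^{-1/2})=o_p(N^{-1/2})$, and $\lfloor\log|\mathcal{S}^*|/\log n\rfloor+1\le\lfloor\log K/\log n\rfloor+1$), and then rerun the Lemma \ref{Lemma:assumption}/Theorem \ref{thm:DEL} machinery on the selected machines. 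Your one divergence --- the worry that machines in $\mathcal{S}^*$ share only the mean and may have heterogeneous covariances, requiring a self-studentization argument for the $\chi^2_{(d)}$ limit --- goes beyond the paper's implicit setting: the proof of Lemma \ref{lemma:norm-diff} already posits a single common covariance $\mathbf\Sigma$ for all $i\in\mathcal{S}^*$ (the non-Byzantine machines hold identically distributed data), so the paper simply cites ``similar techniques to the proof of Theorem \ref{thm:DEL}'' without needing that extra layer.
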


We have considered the case that the sample means in Byzantine machines do not converge to $\bm\mu_0$.
This is a special case, which can be caused by either heterogeneity due to multiple data sources or data crash.
Another case is that some wrong gradients in some working machines are transmitted to the central machine directly. For the case, let $\bm h_i$ denote the gradient received by central machine from $i$-th worker machine, and $\nabla g_i(\bm\lambda)$ the gradient calculated by the data stored in the $i$-th machine, for $i=1,\ldots,K$.
Motivated by the analysis of the case that Byzantine machines have wrong sample means, we select the index set as $\widetilde{\mathcal{S}} = \{i:\, \widetilde{s}_i >K/2, i=1,\ldots,K \}$, where $\widetilde{s}_i = \#\{ i':\, \|\bm h_i - \bm h_{i'} \|_2 < \gamma_n, i'= 1,\ldots,K \}$ for $i=1,\ldots,K$.
We denote the index set of the worker machines without Byzantine failures as $\widetilde{\mathcal{S}}^* = \{i:\, (\log K)^{-1/2}\sqrt{n} \| \bm h_i - \nabla g_i(\bm\lambda) \|_2 <C ,i=1,\ldots,K \}$ for some constant $C>0$.
The reason is that a wrong transmission can be considered as a correct one as long as the error is small enough and it does not affect the asymptotic properties.
In this case, Theorem \ref{thm:cluster} and Theorem \ref{thm:RDEL} still hold with $\mathcal{S}^*$, $\mathcal{S}$ replaced by $\widetilde{\mathcal{S}}^*$, $\widetilde{\mathcal{S}}$, and the condition $(\log K)^{-1/2} \sqrt{n} \leftsecond \min_{i'\notin \mathcal{S} ^*} \| \operatorname{E}(\bm X_{i',1}) - \bm\mu_0 \|_2 - \gamma_n \rightsecond \to \infty$ as $n \to \infty$ replaced by $(\log K)^{-1/2} \sqrt{n} \leftsecond \min_{i'\notin \widetilde{\mathcal{S}}^*} \| \bm h_i - \nabla g_i(\bm\lambda) \|_2 - \gamma_n \rightsecond \to \infty$ as $n \to \infty$, respectively.
The reason that we consider the mean Byzantine failure first is that it is an easy and important case and provides 
help for obtaining the algorithm and theory of the general case.

\section{Simulation Studies} \label{simulations}

In this section, we conducted numerical simulations to evaluate the finite-sample performance of the DEL method and the DEL$_{\mathcal{S}}$ method, respectively.
For comparison, we took a sample size of $N=200,000$ such that the EL based on the entire data set can be calculated in the central machine.
In each simulation setting, observations $\{\bm X_{i,j}:\,i=1,\ldots,K,\,j=1,\ldots,n \}$ were generated independently from the $d$-dimensional distribution with the mean vector $\bm\mu_0$ and $d=15$.
We considered different data generation mechanisms in simulations and changed the mean of observations only in machines with Byzantine failures.
The simulation was repeated 2,000 times for the first three cases.

In the first set of simulations, we reported the Type I error rates of EL, DEL, DEL$_\mathcal{S}$ for the hypothesis test ``H$_0$: $\bm\mu=\bm\mu_0$ $\leftrightarrow$ H$_1$: $\bm\mu\neq\bm\mu_0$" at the level of significance $\alpha=0.05$.
We considered the two cases where observations $\{\bm X_{i,j}:\, i=1,\ldots,K,\, j=1,\ldots,n\}$ were generated independently from two kinds of $d$-dimensional distributions, respectively: (a) the multivariate normal distribution $\mathcal N(\bm 0,\mathbf \Sigma_1)$, where $\mathbf \Sigma_1 = (\sigma_{a,b})_{d\times d}$ with $\sigma_{a,a}=1$ for $a=1,\ldots,d$ and $\sigma_{a,b} = 0.5$ for $a\neq b$; (b) the independent copies of the joint exponential distribution $\mathbf\Sigma_2\bm E$, where $\mathbf\Sigma_2 = (\sigma_{a,b})_{d\times d}$, $\sigma_{a,b}=0.5^{|a-b|}$ for $a,b=1,\ldots,d$ and $\bm E = (e_1,\ldots,e_d)^\top$ with $e_a$ following the exponential distribution with rate $1$ independently for $a=1,\ldots,d$.
We set $\bm \mu_0=\bm 0$ for the case of the multivariate normal distribution and $\bm\mu_0 = \mathbf\Sigma_2 \cdot \bm 1$ with $\bm 1=(1,\ldots,1)^\top$ for the case of the joint exponential distribution.
We considered different values of the number of worker machines by setting $K=250,400$.
The observations in Byzantine machines were generated by changing the mean.
For the case of multivariate normal distribution, the observations in Byzantine machines were generated from $\mathcal{N}(\bm\mu_w,\mathbf\Sigma_1)$ with $\bm\mu_w = (\delta_B,\ldots,\delta_B)^\top$.
For the case of joint exponential distribution, the observations in Byzantine machines were generated as the independent copies of $\mathbf\Sigma_2 \bm E_w$ with $\bm E_w = (e_{w,1},\ldots,e_{w,d})$, where $e_{w,a}$ follows the exponential distribution with rate $1+\delta_B$ independently for $a=1,\ldots,d$.
We considered Byzantine failures by setting $\delta_B = 0.3,0.5$, respectively.
The number of Byzantine machines, denoted by $n_B$, was set to be $n_B = 0,2,10,50$.

Table \ref{table:TypeI} summarizes the Type I error rates of EL, DEL, and DEL$_\mathcal{S}$.
As shown in Table \ref{table:TypeI}, in the absence of Byzantine failures ($n_B=0$), the Type I error rates of EL, DEL, and DEL$_\mathcal{S}$ are the same in all scenarios in the sense of being accurate to the four decimal places.
In the presence of Byzantine failures ($n_B>0$), the Type I error rate of DEL increases to 1 as the number of Byzantine machines $n_B$ increases because DEL method cannot address Byzantine failures.
As expected, the Type I error rate of DEL$_\mathcal{S}$ can approximate the significance level well.

\begin{table}[htbp]
\centering
\caption{The Type I error rates of EL, DEL, DEL$_\mathcal{S}$ with $\alpha=0.05$}\label{table:TypeI}
\begin{tabular}{cccccccccc}
\hline
                     &       &  & \multicolumn{3}{c}{$\mathcal N$}             &  & \multicolumn{3}{c}{Exp}             \\ \cline{4-6} \cline{8-10}
$\delta_B$           & $n_B$ &  & EL     & DEL    & DEL$_\mathcal{S}$ &  & EL     & DEL    & DEL$_\mathcal{S}$ \\ \hline
\multicolumn{10}{c}{K=250}                                                                                     \\ \hline
\multirow{4}{*}{0.3} & $~$0  &  & 0.0510 & 0.0510 & 0.0510            &  & 0.0510 & 0.0510 & 0.0510            \\
                     & $~$2  &  & -      & 0.1170 & 0.0530            &  & -      & 0.7605 & 0.0515            \\
                     & 10    &  & -      & 1.0000 & 0.0535            &  & -      & 1.0000 & 0.0525            \\
                     & 50    &  & -      & 1.0000 & 0.0475            &  & -      & 1.0000 & 0.0510            \\ \hline
\multirow{4}{*}{0.5} & $~$0  &  & 0.0525 & 0.0525 & 0.0525            &  & 0.0520 & 0.0520 & 0.0520            \\
                     & $~$2  &  & -      & 0.2855 & 0.0520            &  & -      & 0.9995 & 0.0540            \\
                     & 10    &  & -      & 1.0000 & 0.0480            &  & -      & 1.0000 & 0.0570            \\
                     & 50    &  & -      & 1.0000 & 0.0550            &  & -      & 1.0000 & 0.0460            \\ \hline
\multicolumn{10}{c}{K=400}                                                                                     \\ \hline
\multirow{4}{*}{0.3} & $~$0  &  & 0.0510 & 0.0510 & 0.0510            &  & 0.0510 & 0.0510 & 0.0510            \\
                     & $~$2  &  & -      & 0.0765 & 0.0525            &  & -      & 0.3000 & 0.0525            \\
                     & 10    &  & -      & 0.8670 & 0.0530            &  & -      & 1.0000 & 0.0545            \\
                     & 50    &  & -      & 1.0000 & 0.0515            &  & -      & 1.0000 & 0.0475            \\ \hline
\multirow{4}{*}{0.5} & $~$0  &  & 0.0525 & 0.0525 & 0.0525            &  & 0.0520 & 0.0520 & 0.0520            \\
                     & $~$2  &  & -      & 0.1245 & 0.0525            &  & -      & 0.7835 & 0.0540            \\
                     & 10    &  & -      & 1.0000 & 0.0560            &  & -      & 1.0000 & 0.0510            \\
                     & 50    &  & -      & 1.0000 & 0.0525            &  & -      & 1.0000 & 0.0480            \\ \hline
\end{tabular}
\end{table}

In the second set of simulations, we plotted the power curves of EL, DEL, and DEL$_\mathcal{S}$ for the hypothesis test ``H$_0$: $\bm\mu=\bm\mu_0$ $\leftrightarrow$ H$_1$: $\bm\mu\neq\bm\mu_0$".
The observations $\{\bm X_{i,j}:\, i=1,\ldots,K,\, j=1,\ldots,n\}$ were generated from two kinds of $d$-dimensional distributions: (a) the multivariate normal distribution $\mathcal{N} (\bm\mu_{\delta_1},\mathbf\Sigma_1)$ with $\bm\mu_{\delta_1} = (\delta_1,\ldots,\delta_1)^\top$ and $\delta_1\in \{-0.015,\ldots,\allowbreak -0.001,0,0.001,\ldots,0.015\}$; (b) the joint exponential distribution $\mathbf\Sigma_2\bm E_{\delta_2}$, where $\bm E_{\delta_2} = (e_{\delta_2,1},\ldots,e_{\delta_2,d})^\top$ and $e_{\delta_2,a}$ follows the exponential distribution with rate $1+\delta_2$ independently with $\delta_2\in \{-0.005,\ldots,-0.0005,0,0.0005,\ldots,0.005\}$ for $a=1,\ldots,d$.
We set $\bm\mu_0=\bm 0$ for the case of the multivariate normal distribution and $\bm\mu_0=\mathbf\Sigma_2\cdot \bm 1$ for the case of the joint exponential distribution.
We kept the number of worker machines being $K=250$.
The observations in Byzantine machines were generated by changing the mean.
For the case of multivariate normal distribution, the observations in Byzantine machines were generated from $\mathcal{N}(\bm\mu_w,\mathbf\Sigma_1)$ with $\bm\mu_w = (\delta_B,\ldots,\delta_B)^\top$.
For the case of joint exponential distribution, the observations in Byzantine machines were generated as the independent copies of $\mathbf\Sigma_2 \bm E_w$ with $\bm E_w = (e_{w,1},\ldots,e_{w,d})$, where $e_{w,a}$ follows the exponential distribution with rate $1+\delta_B$ independently for $a=1,\ldots,d$.
We fixed $\delta_B = 0.3$ and took the number of Byzantine machines $n_B=0,2,5,50$.

\begin{figure}[htbp]
\centering
\subfigure[$\mathcal{N}$]{\includegraphics[width=7cm]{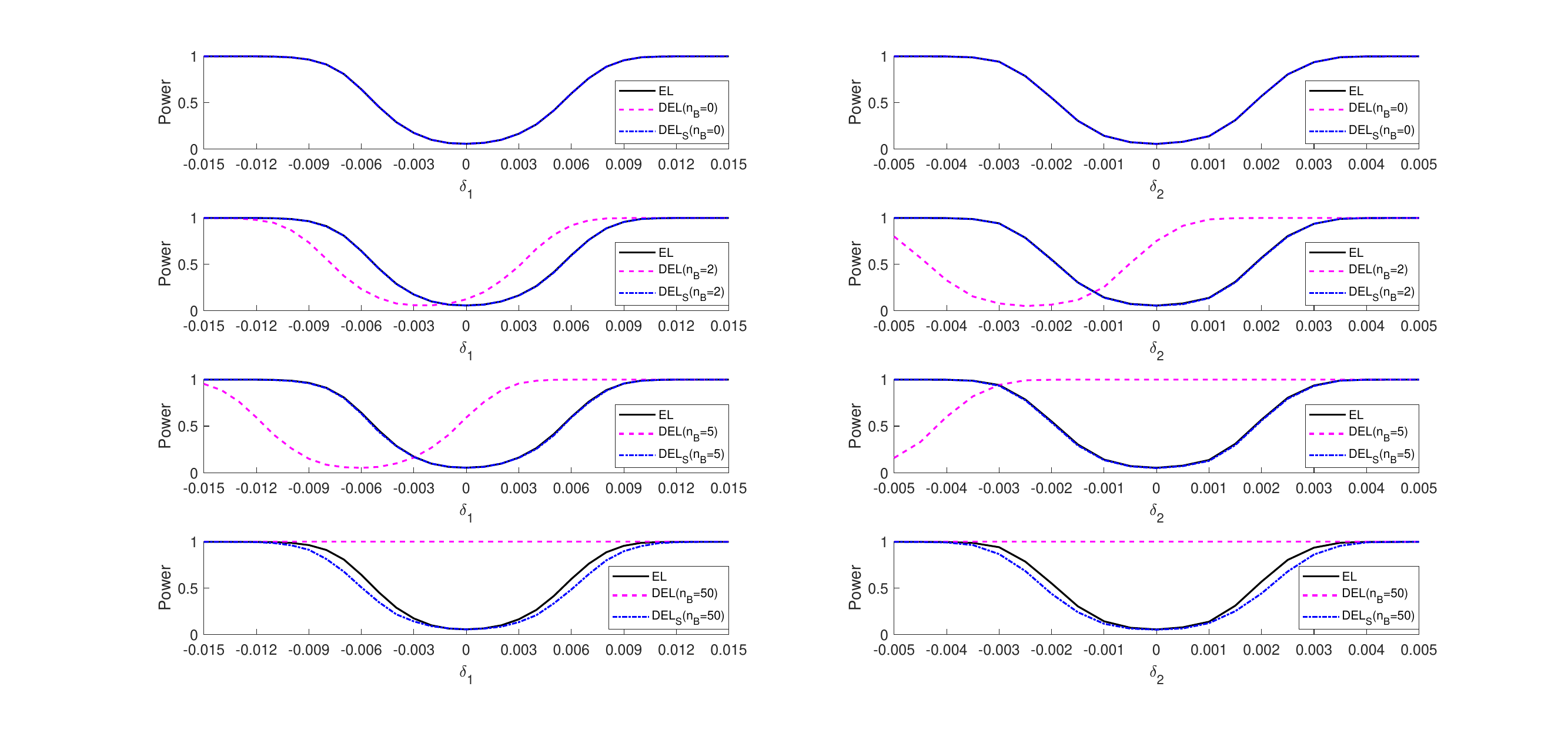}}
\subfigure[Exp]{\includegraphics[width=7cm]{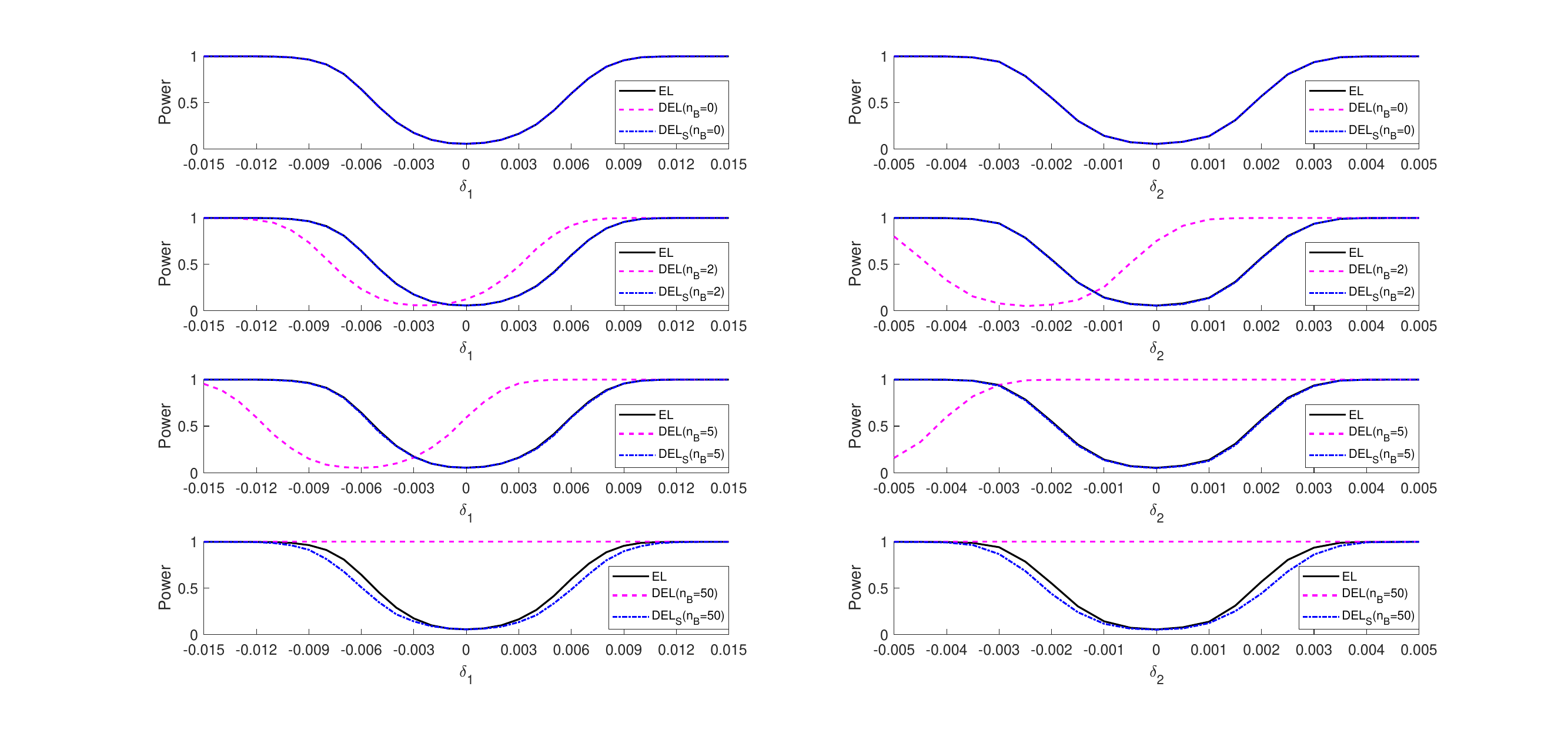}}
\caption{The power curves of EL, DEL and DEL$_\mathcal{S}$ with $n_B=0,2,5,50$, and $\alpha=0.05$: (a) the power curves under the case of multivariate normal distribution; (b) the power curves under case of joint exponential distribution.}\label{fig:Power}
\end{figure}

Figure \ref{fig:Power} plots the power curves of EL, DEL, and DEL$_\mathcal{S}$ with $\alpha=0.05$ and $n_B=0,2,5,50$ under the two data generation mechanisms.
The left and right ones are the power curves with observations generated from the multivariate normal distribution and the joint exponential distribution, respectively.
In the absence of Byzantine failures ($n_B=0$), EL, DEL, and DEL$_\mathcal{S}$ have the same power for arbitrary $\delta_1$, $\delta_2$ in our simulations under the two data generation mechanisms, which shows the validity of DEL and DEL$_\mathcal{S}$ in the absence of Byzantine failures.
In the presence of Byzantine failures ($n_B>0$), the power of DEL$_\mathcal{S}$ is slightly lower than EL's, and the power of DEL$_\mathcal{S}$ increases as $n_B$ decreases.
The reason may be that EL calculates the empirical log–likelihood ratio using the entire data set in the central machine, whereas DEL$_\mathcal{S}$ calculates the empirical log–likelihood ratio only using the samples in the worker machines without Byzantine failures.
Furthermore, in the presence of Byzantine failures ($n_B>0$), as the deviation degree of the null hypothesis increases, the power of DEL cannot increase monotonically.
An intuitive explanation is that DEL cannot eliminate the data from Byzantine machines.
Comparing with the power curve of EL, the power curve of DEL shifts left in the presence of Byzantine failures because the population mean of observations in Byzantine machines has a positive bias ($\delta_B>0$), and if we took $\delta_B<0$, the power curve of DEL would shift right.
The degree of shifting increases as $n_B$ increases.
The Type I error rate of DEL cannot approximate the significance level well even if the power of DEL is higher than the other method in some situation.
Hence, in the presence of Byzantine failures, DEL method is not reliable since its performance is heavily affected by the Byzantine failures.

In the third set of simulations, we compared the three methods in terms of the coverage probability of the confidence regions with confidence level $0.90$ in Table \ref{table:confidence}.
We took the same simulation settings as those in the simulation of the Type I error rate, including the data generation mechanisms in machines with and without Byzantine failures, $K$, $\delta_B$, and $n_B$.
As shown in Table \ref{table:confidence}, the confidence regions calculated by EL, DEL, DEL$_\mathcal{S}$ have the same coverage probability in the absence of Byzantine failures ($n_B=0$) in the sense of being accurate to the four decimal places.
In the presence of Byzantine failures ($n_B>0$), the coverage probability of DEL decreases to $0$ as the number of Byzantine machines $n_B$ increases because this method cannot address Byzantine failures.
As expected, the coverage probabilities of DEL$_\mathcal{S}$ can approximate the corresponding confidence level well.

\begin{table}[htbp]
\centering
\caption{Average coverage probabilities of EL, DEL and DEL$_\mathcal{S}$ of the confidence regions with confidence level $0.90$}\label{table:confidence}
\begin{tabular}{cccccccccc}
\hline
                     &       &  & \multicolumn{3}{c}{$\mathcal{N}$}             &  & \multicolumn{3}{c}{Exp}             \\ \cline{4-6} \cline{8-10}
$\delta_B$           & $n_B$ &  & EL     & DEL    & DEL$_\mathcal{S}$ &  & EL     & DEL    & DEL$_\mathcal{S}$ \\ \hline
\multicolumn{10}{c}{K=250}                                                                                     \\ \hline
\multirow{4}{*}{0.3} & $~$0  &  & 0.8940 & 0.8940 & 0.8940            &  & 0.9000 & 0.9000 & 0.9000            \\
                     & $~$2  &  & -      & 0.7940 & 0.8960            &  & -      & 0.1465 & 0.9020            \\
                     & 10    &  & -      & 0.0000 & 0.9010            &  & -      & 0.0000 & 0.9000            \\
                     & 50    &  & -      & 0.0000 & 0.8975            &  & -      & 0.0000 & 0.9005            \\ \hline
\multirow{4}{*}{0.5} & $~$0  &  & 0.8960 & 0.8960 & 0.8960            &  & 0.8995 & 0.8995 & 0.8995            \\
                     & $~$2  &  & -      & 0.5890 & 0.8995            &  & -      & 0.0005 & 0.8965            \\
                     & 10    &  & -      & 0.0000 & 0.8915            &  & -      & 0.0000 & 0.8975            \\
                     & 50    &  & -      & 0.0000 & 0.8920            &  & -      & 0.0000 & 0.9070            \\ \hline
\multicolumn{10}{c}{K=400}                                                                                     \\ \hline
\multirow{4}{*}{0.3} & $~$0  &  & 0.8940 & 0.8940 & 0.8940            &  & 0.9000 & 0.9000 & 0.9000            \\
                     & $~$2  &  & -      & 0.8575 & 0.8975            &  & -      & 0.5720 & 0.9060            \\
                     & 10    &  & -      & 0.0795 & 0.8890            &  & -      & 0.0000 & 0.9050            \\
                     & 50    &  & -      & 0.0000 & 0.8965            &  & -      & 0.0000 & 0.9100            \\ \hline
\multirow{4}{*}{0.5} & $~$0  &  & 0.8960 & 0.8960 & 0.8960            &  & 0.8995 & 0.8995 & 0.8995            \\
                     & $~$2  &  & -      & 0.7860 & 0.8990            &  & -      & 0.1415 & 0.8965            \\
                     & 10    &  & -      & 0.0000 & 0.8965            &  & -      & 0.0000 & 0.8940            \\
                     & 50    &  & -      & 0.0000 & 0.8865            &  & -      & 0.0000 & 0.8960            \\ \hline
\end{tabular}
\end{table}

\section{Real Data Analysis} \label{realdata}

\subsection{Airline On-time Performance Study}

To track the on-time performance of all commercial flights operated by large air carriers in USA, information about on-time, delayed, canceled, and diverted flights have been collected from October 1987 to April 2008.
The full data set contains $118,914,458$ records which is available on \url{https://community.amstat.org/jointscsg-section/dataexpo/dataexpo2009}. The delay costs are very high due to extra fuel, crew, maintenance, and so on.
Therefore, we are interested in constructing the confidence region of the average arrival delay (in minutes) and the average departure delay (in minutes), and the hypothesis testing for the average delay.
After dropping the samples whose arrival delay or departure delay is missing, we totally have $N = 116,411,531$ data points. The entire data set is randomly divided into $50$ machines.
In the following, we apply DEL and DEL$_\mathcal{S}$ to evaluating the airline on-time performance.

The stipulations from the United States Department of Transportation show that flights are on-time if they depart from the gate or arrive at the gate less than 15 minutes after their scheduled departure or arrival times.
We carried out a hypothesis test ``H$_0$: $\bm\mu \in \mathcal{A} \leftrightarrow$ H$_1$: $\bm\mu \notin \mathcal{A}$ with $\mathcal{A}:=\{\bm\mu=(\mu_1,\mu_2)^\top|\, \mu_1\le 15, \mu_2 \le 15\}$, where $\mu_1$ and $\mu_2$ denote the means of the arrival delay and departure delay, respectively.
The null hypothesis cannot be rejected at the significance level $\alpha=0.01$ both by DEL and DEL$_\mathcal{S}$.
In fact, the corresponding $p$-values are larger than $1-10^{-10}$.
Hence, we cannot consider the flights operated by large air carriers are delayed.
We also calculated the $90\%$ and $95\%$ confidence regions for the average arrival delay and the average departure delay by DEL and DEL$_\mathcal{S}$, respectively, which are displayed in Figure \ref{figure:RealCon}.
The confidence regions for the average arrival delay and the average departure delay by DEL and DEL$_\mathcal{S}$ are the same with any confidence level because none of the machines are identified as Byzantine machines.

\begin{figure}
\centering
\includegraphics[width=12cm]{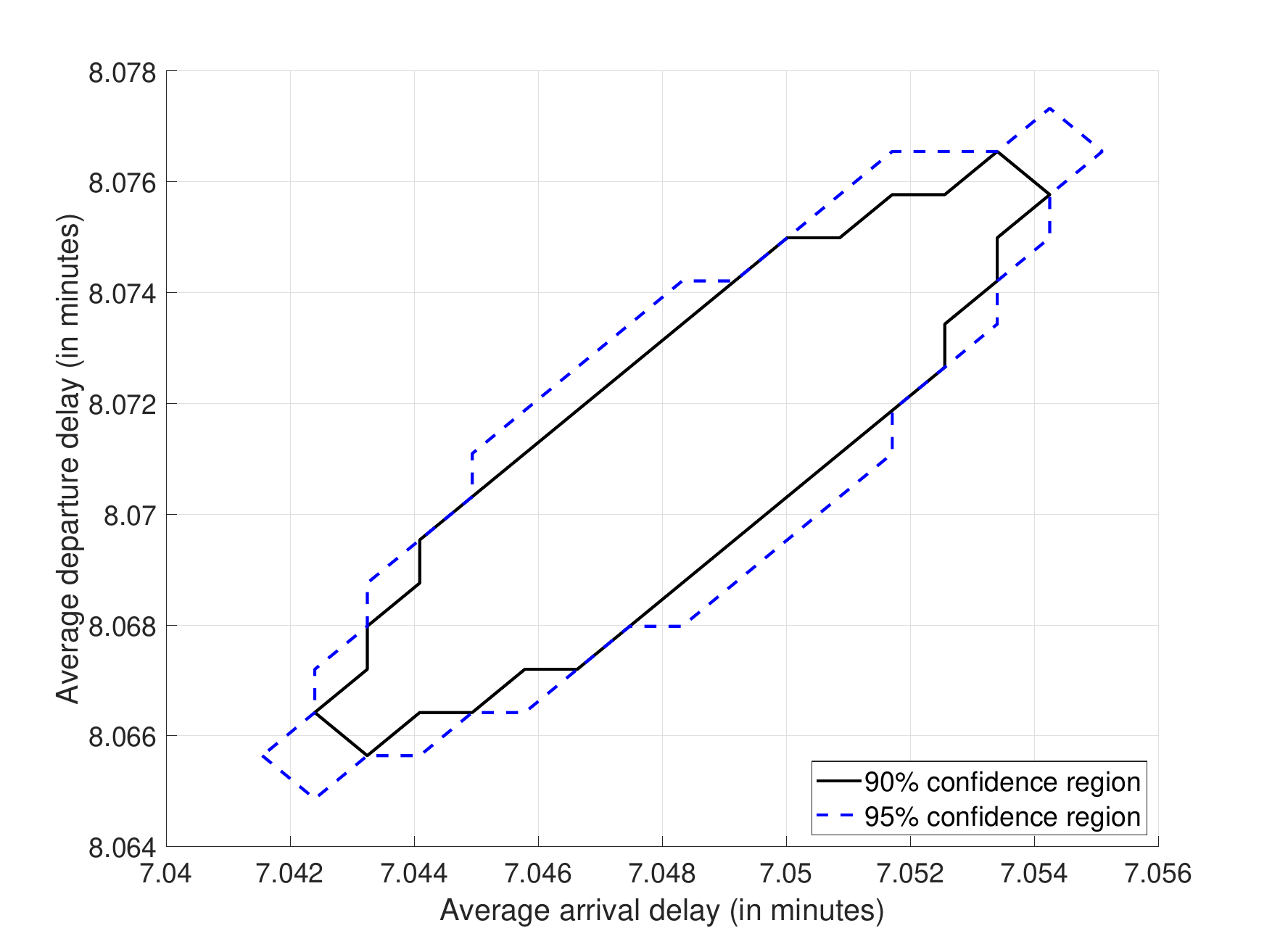}
\caption{The $90\%$ and $95\%$ confidence regions for the average arrival delay and the average departure delay calculated by DEL and DEL$_\mathcal{S}$.}\label{figure:RealCon}
\end{figure}

\subsection{Surface Climate Analysis of Yangtze River Economic Belt}

The Yangtze River Economic Belt in China is an inland river economic belt with global influence.
To analyze the main surface climate type of Yangtze River Economic Belt, we consider the Daily Value Data Set of China Surface Climate Data (V3.0), which is available on \url{http://101.200.76.197:91/mekb/?r=data/detail&dataCode=SURF_CLI_CHN_MUL_DAY_CES_V3.0}.
The data set provides daily surface climate data from  $1951$ to $2010$ among $194$ stations in China.
In our study, we focus on $83$ observation stations in the Yangtze River Economic Belt, and each station has more than $17,000$ observations.
The stations' location is plotted in Figure \ref{fig:map} (a), showing that the stations are relatively evenly distributed in the area.
The climate type is mainly determined by the precipitation and the temperature.
We are interested in constructing the confidence region of the average precipitation and the average temperature for the dominant surface climate type in the Yangtze River Economic Belt, which helps climatologist to classify the climate types.

\begin{figure}[htbp]
\centering
\subfigure[observation stations]{\includegraphics[width=7cm]{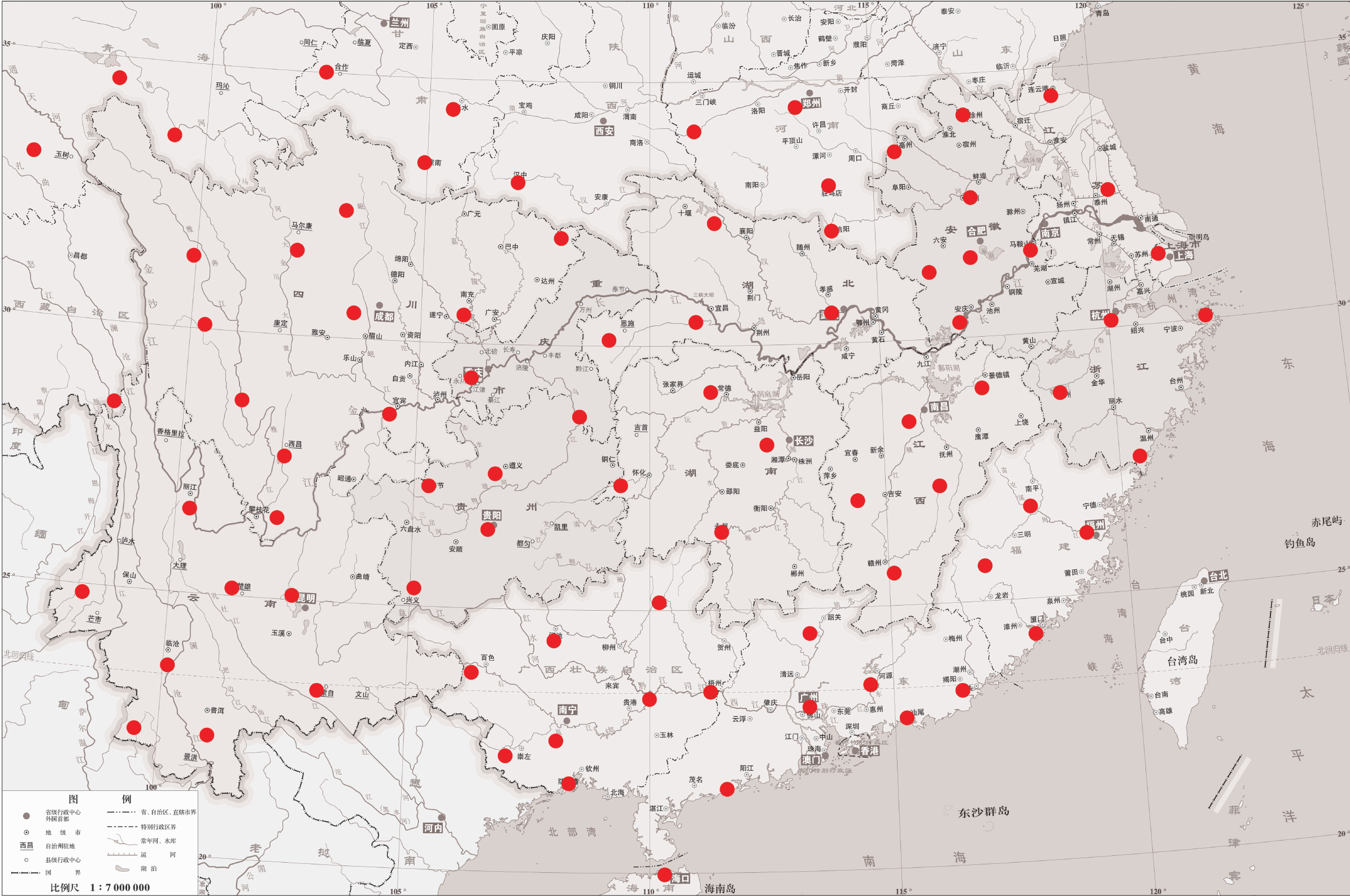}}
\subfigure[selection result]{\includegraphics[width=7cm]{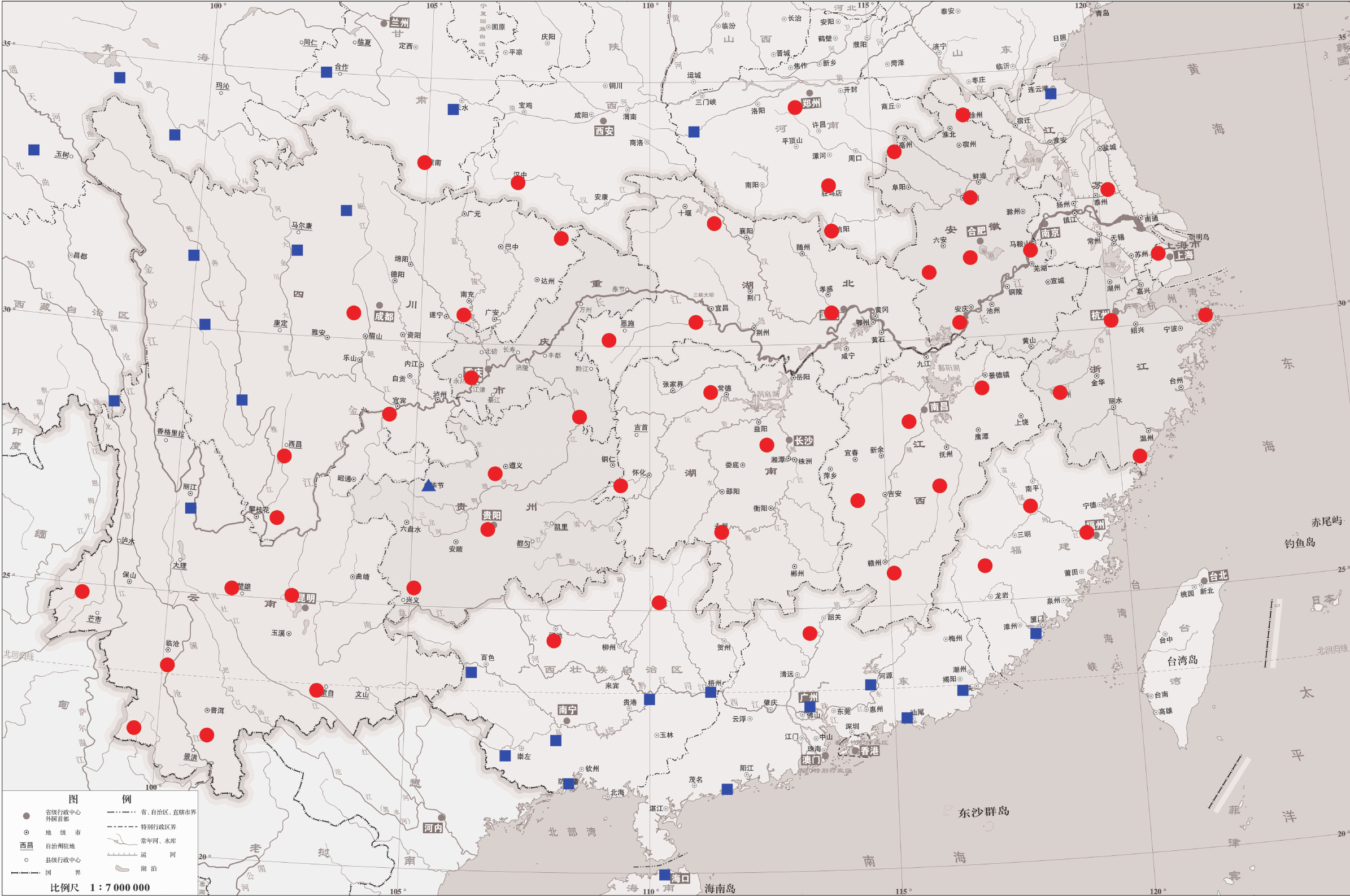}}
\caption{(a) is the location of the observation stations distributed in the Yangtze River Economic Belt. (b) is the selection result by machine selection algorithm: the red dots represent the stations with dominant surface climate type, and the blue squares and blue triangle represent the stations with other surface climate type.}\label{fig:map}
\end{figure}

The observations in different stations are regarded as local data sets stored in different worker machines.
Based on the observations of the precipitation and the temperature, we selected the stations with dominant surface climate type by Algorithm \ref{alg:Choice}, and the selection result is displayed in Figure \ref{fig:map} (b).
In Figure \ref{fig:map} (b), the red dots represent the stations with dominant surface climate type, and the blue squares and blue triangle represent the stations with other surface climate type.
For the stations in the middle of the map labelled by red dot, they have the main average precipitation and average temperature, which are in the subtropical climate zone in reality.
For the stations in the north of the map labelled by blue square, they have a lower average precipitation and average temperature, which are in the temperate climate zone and plateau mountain climate zone in reality.
For the stations in the south of the map labelled by blue square, they have a higher average precipitation and average temperature, which are in the tropical climatic zone in reality.
Specially, there exists a station labelled by the blue triangle, which is surrounded by stations labelled by red dot.
The average precipitation and the average temperature from this station labelled by the blue triangle is distinctly different from ones from the stations around it because the altitude of this station is $1,510.6$ meters, leading to a much lower average precipitation and average temperature.
Further, we calculated the $90\%$ and $95\%$ confidence regions for the average precipitation and the average temperature by DEL$_\mathcal{S}$, which are displayed in Figure \ref{figure:mapCon}.
In this case, we cannot calculate the confidence region by DEL because the homogeneity assumption for DEL is violated, and Algorithm \ref{alg:EL} does not converge due to the large difference between $\nabla g(\widehat{\bm\lambda}_t) - \nabla g_i(\widehat{\bm\lambda}_t)$ in Step 5.

\begin{figure}[htbp]
\centering
\includegraphics[width=12cm]{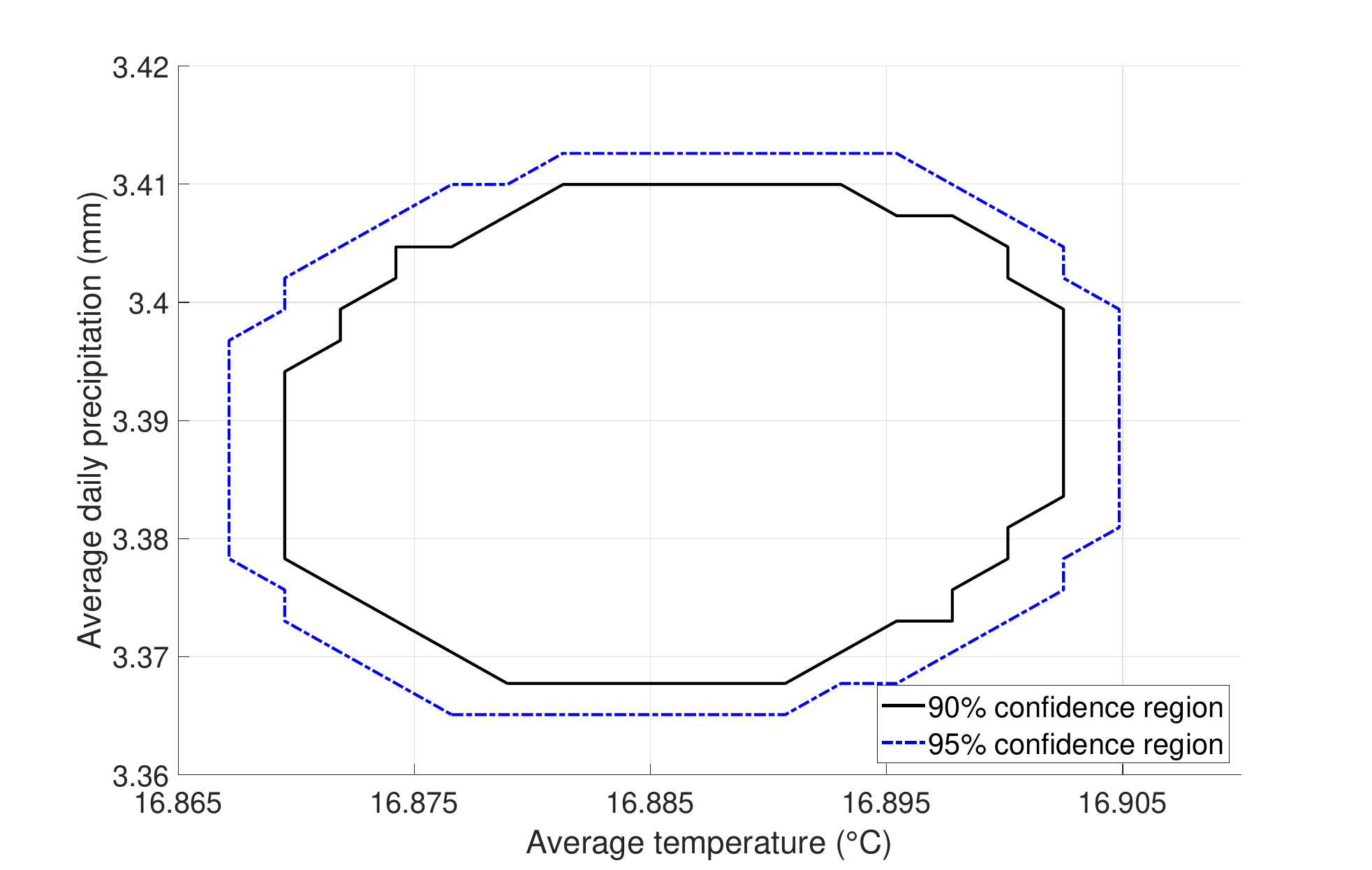}
\caption{\small The $90\%$ and $95\%$ confidence regions for the average precipitation and the average temperature calculated by DEL$_\mathcal{S}$.
}\label{figure:mapCon}
\end{figure}

\section{Discussions}

In this paper, we develop a DEL method by calculating the Lagrange multiplier in a distributed manner. By developing a machine selection algorithm, we extend DEL to address Byzantine failures. This implies that the DEL$_\mathcal{S}$ method can easily be extended to the case of heterogeneous data. The nice properties of EL introduced at the beginning of Section \ref{sec:intro} are inherited asymptotically since we have $\| \widehat{\bm\lambda}^* - \widehat{\bm\lambda}_T\|_2 = o_p(N^{-1/2})$ as long as $T$ is large enough.
The proposed methods can be easily extended to a broad range of scenarios such as parametric regression, estimation equation, and distributed jackknife empirical likelihood with nonlinear constraints, although we focus on the mean inference.


\acks{Wang’s research was supported by the National Natural Science Foundation of China (General program 11871460, General program 12271510 and program for Innovative Research Group 61621003), a grant from the Key Lab of Random Complex Structure and Data Science, CAS.
Sheng's research was supported by the National Natural Science Foundation of China (Young Scientists Fund 12201616).}



\appendix
\section{Proof of the main results}



In this appendix we provide the proofs of Lemmas \ref{Lemma:assumption} and \ref{lemma:norm-diff}, and Theorems \ref{thm:DEL}, \ref{thm:cluster}, and \ref{thm:RDEL}, and some useful lemmas.

\subsection{Preliminary Results}
The proofs require the following lemmas.

\begin{lemma}\label{lemma: order.S}
Let $\bm Z_1,\ldots,\bm Z_N$ be independent and identically distributed random vectors with $\operatorname{E}\| \bm Z_1 \|_2^\beta<\infty$, where $\|\cdot\|_2$ denotes the $\ell_2$-norm.
Let $W_N=\max_{1\leq i\leq N} \|\bm Z_i\|_2$. Then $W_N=o_p(N^{1/\beta})$.
\end{lemma}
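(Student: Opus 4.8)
The plan is to show $W_N/N^{1/\beta}\to 0$ in probability directly from the definition, i.e.\ to prove that $\Pr(W_N>\epsilon N^{1/\beta})\to 0$ for every fixed $\epsilon>0$. Since $W_N$ is a maximum over the i.i.d.\ sample, the natural first step is a union bound:
\[
\Pr\bigl(W_N>\epsilon N^{1/\beta}\bigr)
=\Pr\Bigl(\max_{1\le i\le N}\|\bm Z_i\|_2>\epsilon N^{1/\beta}\Bigr)
\le N\,\Pr\bigl(\|\bm Z_1\|_2>\epsilon N^{1/\beta}\bigr),
\]
using that the $\bm Z_i$ are identically distributed. It therefore suffices to prove that $N\,\Pr(\|\bm Z_1\|_2>\epsilon N^{1/\beta})\to 0$ as $N\to\infty$.

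Next I would pass to the scalar nonnegative variable $X:=\|\bm Z_1\|_2^{\beta}$, which satisfies $\operatorname{E}X<\infty$ by hypothesis. Writing $t=\epsilon^{\beta}N$, the quantity above becomes
\[
N\,\Pr\bigl(\|\bm Z_1\|_2>\epsilon N^{1/\beta}\bigr)=\epsilon^{-\beta}\,t\,\Pr(X>t),
\]
so the lemma reduces to the classical tail fact that $t\,\Pr(X>t)\to 0$ as $t\to\infty$ whenever $X\ge 0$ is integrable.

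To establish this fact I would use the elementary pointwise bound $t\,\mathbf{1}\{X>t\}\le X\,\mathbf{1}\{X>t\}$, which gives
\[
t\,\Pr(X>t)\le \operatorname{E}\bigl[X\,\mathbf{1}\{X>t\}\bigr].
\]
Since $X$ is integrable and $X\,\mathbf{1}\{X>t\}\to 0$ almost surely as $t\to\infty$ while remaining dominated by the integrable $X$, dominated convergence yields $\operatorname{E}[X\,\mathbf{1}\{X>t\}]\to 0$. Combining the three displays and letting $N\to\infty$ completes the argument.

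The point I would be careful about—the only genuine subtlety in an otherwise routine argument—is that the final step cannot be replaced by a bare Markov inequality: $\Pr(X>t)\le \operatorname{E}X/t$ only gives $t\,\Pr(X>t)\le \operatorname{E}X$, a uniform bound rather than convergence to $0$. The decay to zero genuinely requires the tail-integrability (equivalently dominated-convergence) argument above, exploiting that the contribution of the event $\{X>t\}$ to $\operatorname{E}X$ vanishes as $t\to\infty$.
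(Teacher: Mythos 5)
Your proof is correct, and every step holds up: the union bound, the reduction to the scalar variable $X=\|\bm Z_1\|_2^{\beta}$, and the tail estimate $t\,\Pr(X>t)\le \operatorname{E}[X\,\mathbf{1}\{X>t\}]\to 0$ by dominated convergence. Your closing caveat is also exactly right — bare Markov only gives $t\,\Pr(X>t)\le\operatorname{E}X$, so the vanishing-tail argument is genuinely needed. Your route differs from the paper's, which gives no argument at all and simply cites Lemma 11.2 of Owen (2001). Owen's proof there is a Borel--Cantelli argument: since $\sum_n \Pr(\|\bm Z_n\|_2^{\beta}>\epsilon^{\beta} n)\le \epsilon^{-\beta}\operatorname{E}\|\bm Z_1\|_2^{\beta}<\infty$, almost surely only finitely many exceedances occur, which after handling the early terms yields the \emph{almost sure} statement $W_N=o(N^{1/\beta})$ a.s., of which the $o_p$ claim is a corollary. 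So the trade-off is: the paper's (cited) argument is stronger in conclusion (almost sure rather than in probability) and relies on Borel--Cantelli plus the sequential structure of the sample, whereas yours is self-contained, more elementary, proves exactly what the lemma asserts, and — because it never uses independence, only the union bound and identical marginals — actually holds under weaker hypotheses (no independence needed). Either proof is fully adequate for the way the lemma is used downstream in Lemma \ref{lemma:maxlambda.S}.
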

\begin{proof}
See Proof of Lemma 11.2 in \cite{owen2001empirical}.
\end{proof}

\begin{lemma} \label{lemma: Lip.S} (Properties of convex functions)
Assume that $f: \mathbb{R}^d\to\mathbb{R}$ is convex, then $f$ is locally Lipschitz continuous on $\mathbb{R}^d$.
\end{lemma}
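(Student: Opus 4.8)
The plan is to establish the result in three stages --- local upper boundedness, local lower boundedness, and then the Lipschitz estimate --- following the classical route in convex analysis, and to note that all constants are local (depending on the reference point), which is exactly what ``locally Lipschitz'' requires.

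First I would fix an arbitrary point $\bm x_0 \in \mathbb R^d$ and a radius $r>0$, and show that $f$ is bounded above on the closed ball $\mathcal B(\bm x_0, r)$. The key observation is that this ball is contained in the convex hull of finitely many points, for instance the $2^d$ vertices of the cube $\prod_{k=1}^d [x_{0,k}-r,\, x_{0,k}+r]$. Every point of the ball is then a convex combination of these vertices, so the convexity of $f$ (extended from pairs to finite convex combinations by induction) gives $f(\bm x) \le M$ for all $\bm x \in \mathcal B(\bm x_0, r)$, where $M$ is the maximum of $f$ over the finitely many vertices. It is here that the real-valuedness of $f$ on all of $\mathbb R^d$ is essential, since it guarantees $M < \infty$.

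Next I would deduce a matching lower bound. For $\bm x \in \mathcal B(\bm x_0, r)$ the reflection $2\bm x_0 - \bm x$ also lies in the ball, and applying convexity to $\bm x_0 = \tfrac12 \bm x + \tfrac12 (2\bm x_0 - \bm x)$ yields $f(\bm x_0) \le \tfrac12 f(\bm x) + \tfrac12 f(2\bm x_0 - \bm x)$, hence $f(\bm x) \ge 2 f(\bm x_0) - M$. Combining the two bounds gives $|f| \le M'$ on $\mathcal B(\bm x_0, r)$ for a finite constant $M'$. Finally, with $f$ bounded by $M'$ on the slightly larger ball $\mathcal B(\bm x_0, 2r)$, I would prove the Lipschitz estimate on $\mathcal B(\bm x_0, r)$: given distinct $\bm x, \bm y$ in the smaller ball, extend the segment past $\bm y$ to the point $\bm z = \bm y + r(\bm y - \bm x)/\|\bm y - \bm x\|_2 \in \mathcal B(\bm x_0, 2r)$, write $\bm y = (1-t)\bm x + t\bm z$ with $t = \|\bm y - \bm x\|_2 /(\|\bm y - \bm x\|_2 + r)$, and use convexity to obtain $f(\bm y) - f(\bm x) \le t\,(f(\bm z) - f(\bm x)) \le (2M'/r)\,\|\bm y - \bm x\|_2$. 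Swapping the roles of $\bm x$ and $\bm y$ yields the two-sided bound, so $f$ is Lipschitz on $\mathcal B(\bm x_0, r)$ with constant $2M'/r$, and since $\bm x_0$ was arbitrary, $f$ is locally Lipschitz on $\mathbb R^d$.

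I expect the main obstacle to be the first stage, the local upper bound, because this is the only place where the global finiteness hypothesis is genuinely used and where a small amount of geometry (covering a ball by the convex hull of finitely many points) is needed; the lower bound and the Lipschitz estimate are then routine consequences of convexity. Care should also be taken that the degenerate case $\bm x = \bm y$ is handled trivially, so that the quotient defining $\bm z$ and $t$ is only formed when $\|\bm y - \bm x\|_2 > 0$.
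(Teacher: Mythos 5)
Your proposal is correct: the three-stage argument (upper bound on a ball via the $2^d$ cube vertices and Jensen's inequality for finite convex combinations, lower bound via reflection through the center, then the Lipschitz estimate by extending the segment past $\bm y$ into a larger ball where $|f|\le M'$) is the classical convex-analysis proof, and each step as you describe it goes through; in particular $\bm z \in \mathcal B(\bm x_0,2r)$, the convex-combination coefficient $t=\|\bm y-\bm x\|_2/(\|\bm y-\bm x\|_2+r)$ is right, and the resulting constant $2M'/r$ is local, as required. The paper, by contrast, does not prove the lemma at all: it simply cites Theorem 6.7 of Evans and Gariepy (2015), whose proof runs along a different, measure-theoretic route (mollification of $f$ and integral-average estimates, which additionally yield quantitative bounds of the form $\operatorname{ess\,sup}_{B(x,r/2)}|Df| \le C r^{-1} \fint_{B(x,r)}|f|$). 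So your argument buys a self-contained, elementary proof using nothing beyond the definition of convexity, whereas the paper's citation buys brevity and, through the referenced theorem, stronger quantitative gradient estimates that the lemma's statement here does not actually need. The only point to make explicit when writing it up is that the bound $M'$ used in the Lipschitz stage is the one obtained by running your first two stages on the ball of radius $2r$ (not $r$), which your sketch already indicates.
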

\begin{proof}
See Proof of Theorem 6.7 in \cite{evans2015measure}.
\end{proof}

\begin{lemma}\label{lemma:maxlambda.S}
Let $\bm X_{i,j}$, $i=1,\ldots,K$, $j=1,\ldots,n$, be independent and identically distributed random vectors with $\operatorname{E}\| \bm X_{1,1} \|_2^\beta<\infty$ with $\beta \ge 4$. Let
$$
\widehat{\bm\lambda}^* = \arg\min_{\bm\lambda} -\frac{1}{N} \sum_{i=1}^K \sum_{j=1}^n \log \leftsecond 1+\bm\lambda^\top(\bm X_{i,j} - \bm\mu)\rightsecond.
$$
Then we have
\begin{equation} \label{eq:maxlambda.S}
\max_{1\leq i \leq K,1\leq j\leq n} |\bm{\lambda}^\top(\bm X_{i,j}-\bm{\mu})| = o_p(1),
\end{equation}
for $\bm\lambda \in \mathcal B(\bm{\widehat\lambda}^*,C_1 n^{-1/2})$ and $K=O(N^{1-2/\beta})$.
\end{lemma}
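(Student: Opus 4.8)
The plan is to reduce the maximum to a product of two familiar quantities — the $\ell_2$-norm of $\bm\lambda$ and the largest centered observation norm — and then balance their rates against the growth condition $K=O(N^{1-2/\beta})$. By the Cauchy--Schwarz inequality, for every $\bm\lambda$,
$$
\max_{1\le i\le K,\,1\le j\le n}|\bm\lambda^\top(\bm X_{i,j}-\bm\mu)|\le \|\bm\lambda\|_2\cdot\max_{1\le i\le K,\,1\le j\le n}\|\bm X_{i,j}-\bm\mu\|_2,
$$
so it suffices to control each factor separately.

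First I would establish the rate $\|\widehat{\bm\lambda}^*\|_2=O_p(N^{-1/2})$ by the standard dual argument. Writing $\widehat{\bm\lambda}^*=\rho\,\bm\theta$ with $\rho=\|\widehat{\bm\lambda}^*\|_2$ and $\|\bm\theta\|_2=1$, I project the defining score equation $N^{-1}\sum_{i,j}(\bm X_{i,j}-\bm\mu)/[1+\widehat{\bm\lambda}^{*\top}(\bm X_{i,j}-\bm\mu)]=\bm 0$ onto $\bm\theta$ and use $1/(1+x)=1-x/(1+x)$ to obtain
$$
\frac{\rho\,\bm\theta^\top S\,\bm\theta}{1+\rho\,Y_N}\le \bm\theta^\top\bar{\bm Y}\le\|\bar{\bm Y}\|_2,
$$
where $S=N^{-1}\sum_{i,j}(\bm X_{i,j}-\bm\mu)(\bm X_{i,j}-\bm\mu)^\top$, $\bar{\bm Y}=N^{-1}\sum_{i,j}(\bm X_{i,j}-\bm\mu)$, and $Y_N=\max_{i,j}\|\bm X_{i,j}-\bm\mu\|_2$. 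Since $\|\bar{\bm Y}\|_2=O_p(N^{-1/2})$ by the central limit theorem (the relevant case being $\bm\mu=\bm\mu_0$), the smallest eigenvalue of $S$ is bounded away from zero with probability tending to one (by the law of large numbers and positive-definiteness of the population covariance), and $Y_N\|\bar{\bm Y}\|_2=o_p(1)$, solving the displayed inequality for $\rho$ yields $\rho=O_p(N^{-1/2})$. Consequently, for $\bm\lambda\in\mathcal B(\widehat{\bm\lambda}^*,C_1n^{-1/2})$ one has $\|\bm\lambda\|_2\le\|\widehat{\bm\lambda}^*\|_2+C_1n^{-1/2}=O_p(n^{-1/2})$, using $N^{-1/2}\le n^{-1/2}$.

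Next I would bound the second factor by applying Lemma \ref{lemma: order.S} to the $N$ i.i.d.\ vectors $\{\bm X_{i,j}-\bm\mu\}$ (whose $\beta$-th moment is finite since $\operatorname{E}\|\bm X_{1,1}\|_2^\beta<\infty$), which gives $Y_N=o_p(N^{1/\beta})$; this is the same quantity entering the bound above. Combining the two factors,
$$
\max_{i,j}|\bm\lambda^\top(\bm X_{i,j}-\bm\mu)|=O_p(n^{-1/2})\cdot o_p(N^{1/\beta})=o_p\!\left(n^{-1/2}N^{1/\beta}\right).
$$
The \emph{main and essentially only delicate point} is the rate bookkeeping in this last step. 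Under $K=O(N^{1-2/\beta})$ we have $n=N/K\ge c\,N^{2/\beta}$ for some constant $c>0$, hence $N^{1/\beta}=O(n^{1/2})$ and therefore $n^{-1/2}N^{1/\beta}=O(1)$. It is crucial that the maximal norm is $o_p(N^{1/\beta})$ rather than merely $O_p(N^{1/\beta})$: the strict little-$o$ multiplied by the bounded factor $n^{-1/2}N^{1/\beta}$ collapses the product to $o_p(1)$, which is exactly the claimed conclusion. Everything else is routine; the content of the lemma is precisely that the growth condition $K=O(N^{1-2/\beta})$ is the correct balance keeping $\max_{i,j}|\bm\lambda^\top(\bm X_{i,j}-\bm\mu)|$ negligible uniformly over the ball $\mathcal B(\widehat{\bm\lambda}^*,C_1n^{-1/2})$.
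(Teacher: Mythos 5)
Your proof is correct and follows essentially the same route as the paper: Cauchy--Schwarz, the bound $\max_{i,j}\|\bm X_{i,j}-\bm\mu\|_2=o_p(N^{1/\beta})$ from Lemma \ref{lemma: order.S}, the rate $\|\widehat{\bm\lambda}^*\|_2=O_p(N^{-1/2})$, and the bookkeeping $n^{-1/2}N^{1/\beta}=O(1)$ under $K=O(N^{1-2/\beta})$. The only differences are cosmetic: you bound $\|\bm\lambda\|_2$ directly by the triangle inequality where the paper splits the inner product into a $(\bm\lambda-\widehat{\bm\lambda}^*)$ term and a $\widehat{\bm\lambda}^*$ term, and you supply the standard dual argument for $\|\widehat{\bm\lambda}^*\|_2=O_p(N^{-1/2})$, which the paper simply asserts.
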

\begin{proof}
It can be derived
\begin{equation}\label{lemma:S3.1.S}
\begin{aligned}
& \max_{1\leq i \leq K,1\leq j\leq n} |\bm\lambda^\top (\bm X_{i,j}-\bm{\mu})|\\
\le& \max_{1\leq i \leq K,1\leq j\leq n} | (\bm\lambda - \widehat{\bm\lambda}^*) ^\top (\bm X_{i,j}-\bm{\mu})| + \max_{1\leq i \leq K,1\leq j\leq n} |\widehat{\bm\lambda}^{*\top} (\bm X_{i,j}-\bm{\mu})|\\
\le& \max_{1\leq i \leq K,1\leq j\leq n} \| \bm\lambda - \widehat{\bm\lambda}^*\|_2 \| \bm X_{i,j}-\bm{\mu}\|_2 + \max_{1\leq i \leq K,1\leq j\leq n} \| \widehat{\bm\lambda}^*\|_2 \| \bm X_{i,j}-\bm{\mu}\|_2.
\end{aligned}
\end{equation}
It follows from Lemma \ref{lemma: order.S} that $\max_{1\leq i \leq K,1\leq j\leq n} \|\bm X_{i,j}-\bm{\mu}\|_2 = o_p(N^{1/\beta})$ under the assumption $\operatorname{E}\| \bm X_{1,1} \|_2^\beta<\infty$.
Furthermore, we have $\| \widehat{\bm\lambda}^*\|_2 = O_p(N^{-1/2})$ and $\| \bm\lambda - \widehat{\bm\lambda}^*\|_2 = O_p(n^{-1/2})$ for $\bm\lambda \in \mathcal B(\bm{\widehat\lambda}^*,C_1 n^{-1/2})$.
Hence, from $K=O(N^{1-2/\beta})$ and $n=N/K$, we have
\begin{equation}\label{lemma:S3.2.S}
\begin{aligned}
& \max_{1\leq i \leq K,1\leq j\leq n} \| \bm\lambda - \widehat{\bm\lambda}^*\|_2 \| \bm X_{i,j}-\bm{\mu}\|_2 + \max_{1\leq i \leq K,1\leq j\leq n} \| \widehat{\bm\lambda}^*\|_2 \| \bm X_{i,j}-\bm{\mu}\|_2\\
=& O_p(n^{-1/2})o_p(N^{1/\beta}) + O_p(N^{-1/2})o_p(N^{1/\beta})\\
=& O_p(K^{1/2}N^{-1/2})o_p(N^{1/\beta}) + O_p(N^{-1/2})o_p(N^{1/\beta})\\
=& o_p(1)
\end{aligned}
\end{equation}
for $\bm\lambda \in \mathcal B(\bm{\widehat\lambda}^*,C_1 n^{-1/2})$.
Combining \eqref{lemma:S3.1.S} and \eqref{lemma:S3.2.S}, we obtain \eqref{eq:maxlambda.S}.
This completes the proof of Lemma \ref{lemma:maxlambda.S}.

\end{proof}

\begin{lemma}\label{lemma:uni speed.S}
Let $\bm V_{1n},\ldots,\bm V_{Kn}$ be $d$-dimensional independent sequences of random vectors satisfying $\sqrt{n} \bm V_{in} \overset{d}{\to} \mathcal{N}(\bm 0,\mathbf A)$ as $n\to\infty$ for $i=1,\ldots,K$, where $d$ is fixed and $K$ can diverge with $n$.
Assume $c^{-1} \mathbf{I}_d \preceq \mathbf{A} \preceq c \mathbf{I}_d$ for some positive constant $c$.
The following conclusion holds
\begin{equation}\label{eq:uni speed.S}
\max_{i=1,\ldots,K} \|\bm V_{in}\|_2 = O_p(n^{-1/2} \sqrt{\log K}).
\end{equation}
\end{lemma}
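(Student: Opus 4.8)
The plan is to bound the maximum by a union bound over $i=1,\ldots,K$ together with a uniform tail estimate for each $\bm V_{in}$ that is dictated by its Gaussian limit. Since $d$ is fixed, it suffices to show that for every $\epsilon>0$ there is a constant $M$ with $\limsup_n \Pr\big(\max_{i}\|\bm V_{in}\|_2 > M n^{-1/2}\sqrt{\log K}\big) < \epsilon$, which is precisely the meaning of the $O_p(n^{-1/2}\sqrt{\log K})$ claim.

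First I would write $t_n = M n^{-1/2}\sqrt{\log K}$ and apply the union bound,
\[
\Pr\Big(\max_{1\le i\le K}\|\bm V_{in}\|_2 > t_n\Big) \le \sum_{i=1}^K \Pr\big(\|\bm V_{in}\|_2 > t_n\big) \le K \max_{1\le i\le K}\Pr\big(\sqrt n\,\|\bm V_{in}\|_2 > M\sqrt{\log K}\big).
\]
Note that independence is not needed here; the union bound alone drives the argument. The role of the Gaussian limit is then to supply the tail: since $\sqrt n\,\bm V_{in}\overset{d}{\to}\mathcal N(\bm 0,\mathbf A)$ with $\mathbf A\preceq c\mathbf I_d$, the limit law obeys the sub-Gaussian estimate $\Pr(\|\mathcal N(\bm 0,\mathbf A)\|_2 > s)\le 2d\exp\{-s^2/(2cd)\}$, obtained by a coordinatewise union bound, the one-dimensional Gaussian tail, and $A_{kk}\le c$. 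Taking $s=M\sqrt{\log K}$ yields a per-machine tail of order $K^{-M^2/(2cd)}$, so the displayed sum is at most $2d\,K^{1-M^2/(2cd)}$, which tends to $0$ as soon as $M>\sqrt{2cd}$. This is exactly the mechanism that manufactures the $\sqrt{\log K}$ rate.

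The hard part will be justifying the use of the Gaussian tail at the \emph{moderate-deviation} scale $s=M\sqrt{\log K}$: convergence in distribution controls only fixed thresholds, whereas $\sqrt{\log K}$ grows with $n$, so weak convergence by itself is insufficient. To make the transfer rigorous I would exploit the sample-mean structure present in the applications, writing $\bm V_{in}=n^{-1}\sum_{j=1}^n \bm W_{ij}$ with independent mean-zero summands of covariance $\mathbf A$. I would truncate the summands at a level $o(N^{1/\beta})$, using Lemma \ref{lemma: order.S} to show the discarded part is negligible uniformly in $i$, and apply a Bernstein-type exponential inequality to the bounded part to obtain a genuine sub-Gaussian tail that holds uniformly over $i$ at the scale $M\sqrt{\log K}$.

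A key structural fact that keeps this feasible is that the condition $K=O(N^{1-2/\beta})$ with $n=N/K$ forces $K=O(n^{\beta/2-1})$, so $\log K=O(\log n)$ and the regime is one of polynomially growing $K$; this ensures the sub-exponential correction term in the Bernstein bound stays dominated by the sub-Gaussian term, so the effective per-machine tail is the Gaussian one used above. Combining this uniform exponential tail with the union bound of the second paragraph and choosing $M>\sqrt{2cd}$ makes $2d\,K^{1-M^2/(2cd)}\to 0$, which gives $\max_{i}\|\bm V_{in}\|_2 = O_p(n^{-1/2}\sqrt{\log K})$ and completes the proof.
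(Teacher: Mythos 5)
Your core mechanism is correct and genuinely different from the paper's. The paper bounds $\Pr\bigl(\max_i \|\bm V_{in}\|_2 \le Cn^{-1/2}\sqrt{\log K}\bigr)$ from \emph{below} by the $K$-fold product $\bigl[\Pr(\|\bm V_{1n}\|_2 \le Cn^{-1/2}\sqrt{\log K})\bigr]^K$, a step that uses independence \emph{and} (implicitly) identical distribution across $i$, then goes coordinatewise and invokes a Hoeffding-type bound (citing Proposition 2.5 of Wainwright, 2019) as though asymptotic normality supplied non-asymptotic sub-Gaussian tails at an $n$-dependent threshold. Your union bound needs neither independence nor identical distributions, and it lands on essentially the same threshold condition ($M>\sqrt{2cd}$ versus the paper's $C>\sqrt{2c}\,d$). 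You also correctly isolate the crux that the paper glosses over: convergence in distribution controls only fixed thresholds, so the conclusion cannot follow from the lemma's stated hypothesis alone; some extra structure from the applications (sample means with $\beta$ moments) must be injected. Up to that point your proposal is, if anything, more careful than the paper's own argument.

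The gap is in your repair of that crux. You assert that $K=O(N^{1-2/\beta})$, hence $\log K=O(\log n)$, ``ensures the sub-exponential correction term in the Bernstein bound stays dominated by the sub-Gaussian term.'' At the boundary of the allowed regime, $K\asymp N^{1-2/\beta}$, i.e.\ $n\asymp N^{2/\beta}$, this is false. Bernstein at the threshold $t_n=Mn^{-1/2}\sqrt{\log K}$ remains in its sub-Gaussian regime only if the truncation level obeys $\tau_n \le C\, n^{1/2}/\sqrt{\log K}$, while Lemma \ref{lemma: order.S} can guarantee that no observation is discarded only for $\tau_n$ of order $N^{1/\beta}$: the maximum of $N$ observations with $\beta$ moments is $o_p(N^{1/\beta})$ but in general \emph{not} $o_p\bigl(N^{1/\beta}/\sqrt{\log K}\bigr)$ (e.g.\ take $\Pr(\|\bm W\|_2>x)=x^{-\beta}(\log x)^{-2}$, for which the maximum exceeds $N^{1/\beta}/\sqrt{\log N}$ with non-vanishing probability). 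Since $N^{1/\beta}\asymp n^{1/2}$ at the boundary, the two requirements on $\tau_n$ are incompatible, and your single-truncation Bernstein argument does not close. To finish one must handle the exceedances above the Bernstein-compatible level separately --- for instance, show that only $O_p\bigl((\log K)^{\beta/2}\bigr)$ of the $N$ observations exceed $n^{1/2}/\sqrt{\log K}$, that with probability tending to one no machine contains more than one of them, and that each contributes $o_p(n^{-1/2})$ to its machine's mean --- or else strengthen the condition on $K$ away from the boundary. To be fair, the paper's own proof does not close this gap either: its asymptotic ``Hoeffding bound'' for merely asymptotically normal sequences, applied at a threshold growing with $n$, is not a consequence of the cited proposition. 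But since you explicitly staked your proof on Bernstein domination, the failure of that claim at the boundary is a concrete step that breaks.
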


\begin{proof}
(a) For the case where $K$ is fixed, \eqref{eq:uni speed.S} holds trivially because we have $\|\bm V_{in}\|_2 = O_p(n^{-1/2})$ for any fixed $i\in\{1,\ldots,K\}$.

(b) For the case where $K$ diverges as $n\to\infty$, we have
\begin{equation}\label{UnS.1.S}
\begin{aligned}
& \Pr \left( \max_{i=1,\ldots,K} \|\bm V_{in}\|_2 \le C n^{-1/2} \sqrt{\log K} \right)\\
\ge& \leftsecond \Pr \left( \|\bm V_{1n}\|_2 \le C n^{-1/2} \sqrt{\log K} \right) \rightsecond^K\\
=& \leftsecond \Pr \left( c^{-1/2} \|\bm V_{1n}\|_2 \le C c^{-1/2} n^{-1/2} \sqrt{\log K} \right) \rightsecond^K\\
\ge& \leftsecond \Pr \left( \|\mathbf{A}^{-1/2}\bm V_{1n}\|_2 \le C c^{-1/2} n^{-1/2} \sqrt{\log K} \right) \rightsecond^K,
\end{aligned}
\end{equation}
for some positive constant $C$.
The last inequation in \eqref{UnS.1.S} holds since we have $c^{-1/2}\|\bm V_{1n}\|_2 \le \|\mathbf{A}^{-1/2} \bm V_{1n}\|_2 \le c^{1/2}\|\bm V_{1n}\|_2$ under the assumption $c^{-1} \mathbf{I}_d \preceq \mathbf{A} \preceq c \mathbf{I}_d$.
It is easy to see that
\begin{equation}\label{UnS.2.S}
\begin{aligned}
& \Pr \left( \|\mathbf{A}^{-1/2}\bm V_{1n}\|_2 \le C c^{-1/2} n^{-1/2} \sqrt{\log K} \right)\\
\ge& \Pr \left( \sum_{a=1}^d \left| (\mathbf{A}^{-1/2}\bm V_{1n})_a \right| \le C c^{-1/2} n^{-1/2} \sqrt{\log K} \right)\\
\ge& \prod_{a=1}^d \Pr \left( \left| (\mathbf{A}^{-1/2}\bm V_{1n})_a \right| \le d^{-1} C c^{-1/2} n^{-1/2} \sqrt{\log K} \right),
\end{aligned}
\end{equation}
where $(\mathbf{A}^{-1/2}\bm V_{1n})_a$ denotes the $a$-th element of $\mathbf{A}^{-1/2}\bm V_{1n}$ for $a=1,\ldots,d$.
Since $\sqrt{n} \bm V_{1n} \overset{d}{\to} \mathcal{N}(\bm 0,\mathbf A)$ as $n\to \infty$, we obtain
$$
\sqrt{n} (\mathbf{A}^{-1/2}\bm V_{1n})_a \overset{d}{\to} \mathcal{N}(0,1),~\text{as}~ n\to\infty.
$$
According to Proposition 2.5 in \cite{wainwright2019high}, for a sequence of random variables $Z_n$ satisfying $\sqrt{n}Z_n \overset{d}{\to} \mathcal{N}(0,\sigma^2)$ and any $\varepsilon > 0$, the Hoeffding bound holds
\begin{equation}\label{UnS.hoe.S}
\Pr(Z_n \le \varepsilon) \ge 1 - \exp \left( -\frac{n\varepsilon^2}{2\sigma^2} \right),~\text{as}~ n\to\infty.
\end{equation}
By setting $Z_n = (\mathbf{A}^{-1/2}\bm V_{1n})_a$ and $\varepsilon = C c^{-1/2} d^{-1} n^{-1/2} \sqrt{\log K}$ in \eqref{UnS.hoe.S}, we can derive
\begin{equation}\label{UnS.3.S}
\Pr \left( \left| (\mathbf{A}^{-1/2}\bm V_{1n})_a \right| \le d^{-1} C c^{-1/2} n^{-1/2} \sqrt{\log K} \right) \ge 1 - 2 \exp \left( -\frac{C^2 \log K}{2 c d^2} \right),
\end{equation}
for $a=1,\ldots,d$.
It follows from \eqref{UnS.1.S}, \eqref{UnS.2.S} and \eqref{UnS.3.S} that
\begin{equation}\label{UnS.4.S}
\Pr \left( \max_{i=1,\ldots,K} \|\bm V_{in}\|_2 \le C n^{-1/2} \sqrt{\log K} \right) \ge \leftsecond 1 - 2 \exp \left( -\frac{C^2 \log K}{2 c d^2} \right) \rightsecond^{dK}.
\end{equation}
For some constant $C>\sqrt{2c}d$, it can be shown that
\begin{equation}\label{UnS.5.S}
\leftsecond 1 - 2 \exp \left( -\frac{C^2 \log K}{2 c d^2} \right) \rightsecond^{dK} \to 1,~\text{as}~K\to\infty.
\end{equation}
Therefore, it follows from \eqref{UnS.4.S} and \eqref{UnS.5.S} that
$$
\Pr \left( \max_{i=1,\ldots,K} \|\bm V_{in}\|_2 \le C n^{-1/2} \sqrt{\log K} \right) \to 1,~\text{as}~n\to\infty,
$$
for constant $C>\sqrt{2c}d$,
which implies \eqref{eq:uni speed.S}.
\end{proof}

\subsection{Proof of Lemma \ref{Lemma:assumption}}

(a) Strong convexity:

By the definition of $g(\bm\lambda)$, we have
$$
\nabla^2 g(\bm{\lambda})=\frac{1}{N} \sum_{i=1}^{K}\sum_{j=1}^{n} \frac{1}{\leftsecond 1+\bm{\lambda}^\top (\bm X_{i,j} - \bm{\mu})\rightsecond^2} (\bm X_{i,j}- \bm{\mu}) (\bm X_{i,j}- \bm{\mu})^\top.
$$
Using the similar arguments to the proof of Lemma \ref{lemma:maxlambda.S}, it can be shown that $$\max_{1\leq i \leq K,1\leq j\leq n} |\bm\lambda^\top (\bm X_{i,j}-\bm{\mu})| = o_p(1)$$ for $\bm{\lambda}\in \mathcal B(\bm 0,C_3 n^{-1/2})$ with a positive constant $C_3$.
This leads to
$$
\Pr \left( 0< \max_{1\leq i \leq K,1\leq j\leq n} \leftsecond 1+ \bm\lambda^\top (\bm X_{i,j} - \bm{\mu}) \rightsecond^2 \le 4 \right) \to 1,~\text{as}~n\to\infty.
$$
Therefore, it can be derived
\begin{equation}\label{PLa.1.S}
\nabla^2 g(\bm{\lambda}) \succeq \frac{1}{4}\frac{1}{N} \sum_{i=1}^{K}\sum_{j=1}^{n} (\bm X_{i,j}- \bm{\mu}) (\bm X_{i,j} - \bm{\mu})^\top,
\end{equation}
for $\bm{\lambda}\in \mathcal B(\bm 0,C_3 n^{-1/2})$ with probability tending to 1 as $n\to\infty$.
Further, we have
\begin{equation}\label{PLa.2.S}
\frac{1}{N} \sum_{i=1}^{K}\sum_{j=1}^{n} (\bm X_{i,j}- \bm{\mu}) (\bm X_{i,j} - \bm{\mu})^\top \overset{p}{\to}  \operatorname{E} (\bm X_{1,1}-\bm{\mu}) (\bm X_{1,1}-\bm{\mu})^\top, ~\text{as}~n\to \infty,
\end{equation}
where ``$\overset{p}{\to}$'' means the convergence in probability.
Under the assumption that the eigenvalues of $\operatorname{Cov}(\bm X_{1,1})$ are bounded away from zero and infinity, there exists some constant $\tau>0$ such that
\begin{equation}\label{PLa.3.S}
\operatorname{E} (\bm X_{1,1}-\bm{\mu}) (\bm X_{1,1}-\bm{\mu})^\top \succeq 4\tau \mathbf I_d.
\end{equation}
Combining \eqref{PLa.1.S}, \eqref{PLa.2.S}, and \eqref{PLa.3.S}, we have $\nabla^2 g(\bm{\lambda}) \succeq \tau \mathbf I_d$ for $\bm{\lambda}\in \mathcal B(\bm 0,C_3 n^{-1/2})$ with probability tending to 1 as $n\to\infty$, which implies the strong convexity property of $g(\bm{\lambda})$ for $\bm{\lambda}\in \mathcal B(\bm 0,C_3 n^{-1/2})$.
The strong convexity property ensures that $\widehat{\bm\lambda}^*$ is the unique minimizer of $g(\bm\lambda)$ for $\bm\lambda \in \mathcal B(\bm 0,C_3 n^{-1/2})$.
Therefore, we can define the neighborhood $\mathcal B (\widehat{\bm\lambda}^*, C_1 n^{-1/2})$ for some positive constant $C_1<C_3$.
Since $\Pr \left(\mathcal B(\bm{\widehat\lambda}^*,C_1 n^{-1/2}) \subseteq \mathcal B(\bm 0,C_3 n^{-1/2}) \right) \to 1$ as $n\to\infty$, we have $\nabla^2 g(\bm{\lambda}) \succeq \tau \mathbf I_d$ for $\bm\lambda \in \mathcal B(\widehat{\bm\lambda}^*, C_1 n^{-1/2})$ with probability tending to 1. This completes the proof of Lemma \ref{Lemma:assumption} (a).

(b) Homogeneity:

We begin the proof with the $(a,b)$-th element of the matrix $\nabla^2 g(\bm{\lambda})$.
For $a,b=1,\ldots,d$, denote by $\left( \nabla^2 g(\bm{\lambda}) \right)_{a,b}$ the $(a,b)$-th element of the matrix $\nabla^2 g(\bm{\lambda}) $ and denote by $(\bm X_{i,j}- \bm{\mu})_a$ the $a$-th element of the vector $\bm X_{i,j}- \bm{\mu}$. By some calculations, 
we have $\left( \nabla^2 g(\bm{\lambda}) \right)_{a,b} = N^{-1} \sum_{i=1}^{K}\sum_{j=1}^{n} (\mathbf{U}_{i,j})_{a,b}$ with
$$
(\mathbf{U}_{i,j})_{a,b} = \frac{(\bm X_{i,j}- \bm{\mu})_a (\bm X_{i,j}- \bm{\mu})_b}{\leftsecond 1+\bm{\lambda}^\top (\bm X_{i,j} - \bm{\mu})\rightsecond^2}.
$$
For $\bm{\lambda}\in \mathcal B(\bm{\widehat\lambda}^*,C_1 n^{-1/2})$, it can be shown that $\operatorname{Var} [ (\mathbf{U}_{i,j})_{a,b}] <\infty$.
By the definition of $\nabla^2 g(\bm\lambda)$, it is easy to see
\begin{equation}\label{PL2b.1.S}
\|\nabla^2 g(\bm\lambda) - \operatorname{E} \leftsecond \nabla^2 g(\bm\lambda) \rightsecond \|_2 = O_p(N^{-1/2}).
\end{equation}
Furthermore, for $a,b=1,\ldots,d$, we have
\begin{equation}\label{PL1.normal.S}
\sqrt{n} \leftsecond \left( \nabla^2 g_i(\bm{\lambda}) \right)_{a,b} - \operatorname{E} \left( \nabla^2 g(\bm{\lambda}) \right)_{a,b} \rightsecond \overset{d}{\to} \mathcal{N}(0,\sigma_{a,b}^2),~\text{as}~n\to\infty,
\end{equation}
with $\sigma_{a,b}^2 = \operatorname{Var} (\mathbf{U}_{1,1})_{a,b}$.
Using \eqref{PL1.normal.S} and the similar arguments to the proof of Lemma \ref{lemma:uni speed.S}, we have
\begin{equation}\label{PL2b.2.S}
\max_{i=1,\ldots,K} \| \nabla^2 g_i(\bm\lambda) - \operatorname{E} \leftsecond \nabla^2 g_i(\bm\lambda) \rightsecond \|_2 = O_p(n^{-1/2}\sqrt{\log K}).
\end{equation}
It is easy to see
\begin{equation}\label{PL2b.3.S}
\| \nabla^2 g_i(\bm\lambda) - \nabla^2 g(\bm\lambda) \|_2 \le \|\nabla^2 g(\bm\lambda) - \operatorname{E} \leftsecond \nabla^2 g(\bm\lambda) \rightsecond \|_2 + \| \nabla^2 g_i(\bm\lambda) - \operatorname{E} \leftsecond \nabla^2 g_i(\bm\lambda) \rightsecond \|_2,
\end{equation}
since $\operatorname{E} \leftsecond \nabla^2 g_i(\bm\lambda) \rightsecond = \operatorname{E} \leftsecond\nabla^2 g(\bm\lambda) \rightsecond$.
Combining \eqref{PL2b.1.S}, \eqref{PL2b.2.S} and \eqref{PL2b.3.S}, it can be known that $\max_{i=1,\ldots,K} \| \nabla^2 g_i(\bm\lambda) - \nabla^2 g(\bm\lambda) \|_2 = O_p(n^{-1/2}\sqrt{\log K})$. This completes the proof of (b).

(c) Smoothness of Hessian:

For $\bm{\lambda}\in \mathcal B(\bm{\widehat\lambda}^*,C_1 n^{-1/2})$, define
$$
f_{a,b}(\bm\lambda) = \frac{1}{N} \sum_{i=1}^{K}\sum_{j=1}^{n} \frac{|(\bm X_{i,j}- \bm{\mu})_a| |(\bm X_{i,j}- \bm{\mu})_b|}{\leftsecond 1+\bm{\lambda}^\top (\bm X_{i,j} - \bm{\mu})\rightsecond^2}.
$$
It can be shown that $f_{a,b}(\bm\lambda)$ is a convex function because
$$
\nabla^2 f_{a,b}(\bm\lambda) = \frac{6}{N} \sum_{i=1}^{K}\sum_{j=1}^{n} \frac{|(\bm X_{i,j}- \bm{\mu})_a| |(\bm X_{i,j}- \bm{\mu})_b|}{\leftsecond 1+\bm{\lambda}^\top (\bm X_{i,j} - \bm{\mu})\rightsecond^4} (\bm X_{i,j} - \bm{\mu}) (\bm X_{i,j} - \bm{\mu})^\top \succeq \bm 0
$$
holds with probability tending to $1$ as $n\to \infty$.
According to Lemma \ref{lemma: Lip.S}, for $\forall \bm{\lambda}_1,\bm{\lambda}_2 \in \mathcal B(\bm{\widehat\lambda}^*,C_1 n^{-1/2})$, there exists a uniform constant $L>0$ such that
\begin{equation}\label{eq:fab.S}
\big| f_{a,b}(\bm\lambda_1) - f_{a,b}(\bm\lambda_2) \big| \leq L \| \bm{\lambda}_1 - \bm{\lambda}_2 \|_2,
\end{equation}
for $a,b=1,\ldots,d$ with probability tending to $1$ as $n\to \infty$.

By the definition of $\infty$-norm of a matrix and the inequality \eqref{eq:fab.S}, the following conclusion holds with probability tending to $1$ as $n\to \infty$,
\begin{align*}
&\| \nabla^2 g(\bm{\lambda}_1) - \nabla^2 g(\bm{\lambda}_2) \|_\infty\\
=& \max_{1\leq a\leq d} \sum_{b=1}^{d} \left|\left( \nabla^2 g(\bm{\lambda}_1)\right)_{a,b} - \left(\nabla^2 g(\bm{\lambda}_2)\right)_{a,b}\right|\\
=& \max_{1\leq a\leq d} \sum_{b=1}^{d} \left| \frac{1}{N} \sum_{i=1}^{K}\sum_{j=1}^{n} \leftthird \frac{(\bm X_{i,j}- \bm{\mu})_a (\bm X_{i,j}- \bm{\mu})_b}{\leftsecond 1+\bm{\lambda}_1^\top (\bm X_{i,j} - \bm{\mu})\rightsecond^2} - \frac{(\bm X_{i,j}- \bm{\mu})_a (\bm X_{i,j}- \bm{\mu})_b}{\leftsecond 1+\bm{\lambda}_2^\top (\bm X_{i,j} - \bm{\mu})\rightsecond^2} \rightthird \right|\\
\le& \max_{1\leq a\leq d} \sum_{b=1}^{d} \frac{1}{N} \sum_{i=1}^{K}\sum_{j=1}^{n} \left| \frac{(\bm X_{i,j}- \bm{\mu})_a (\bm X_{i,j}- \bm{\mu})_b}{\leftsecond 1+\bm{\lambda}_1^\top (\bm X_{i,j} - \bm{\mu})\rightsecond^2} - \frac{(\bm X_{i,j}- \bm{\mu})_a (\bm X_{i,j}- \bm{\mu})_b}{\leftsecond1+\bm{\lambda}_2^\top (\bm X_{i,j} - \bm{\mu})\rightsecond^2} \right|\\
=& \max_{1\leq a\leq d} \sum_{b=1}^{d} \left| f_{a,b}(\bm\lambda_1) - f_{a,b}(\bm\lambda_2) \right|\\
\le& \max_{1\leq a\leq d} \sum_{b=1}^{d} L \| \bm{\lambda}_1 - \bm{\lambda}_2 \|_2.
\end{align*}
Since $\| \nabla^2 g(\bm{\lambda}_1) - \nabla^2 g(\bm{\lambda}_2) \|_2 \le \sqrt{d} \| \nabla^2 g(\bm{\lambda}_1) - \nabla^2 g(\bm{\lambda}_2) \|_\infty$, there exists some constant $M \ge L d \sqrt{d}$ such that $\| \nabla^2 g(\bm\lambda_1) - \nabla^2 g(\bm\lambda_2) \|_2 \leq M \|\bm\lambda_1 - \bm\lambda_2 \|_2$.
This completes the proof of (c).

\subsection{Proof of Theorem \ref{thm:DEL}}

By Lemma \ref{Lemma:assumption}, the optimal function $g(\bm{\lambda})$ satisfies the assumption of strong convexity, homogeneity, and smoothness of hessian.
By the choice of $\bm{\widehat\lambda}_0$, we know $\bm{\widehat\lambda}_0\in \mathcal B(\bm{\widehat\lambda}^*, C_1 n^{-1/2})$ with probability tending to 1.
Using similar arguments to Theorem 3.2 in \cite{fan2021communication}, for $t \ge 0$, we have
\begin{equation*}
\begin{aligned}
\|\widehat{\bm\lambda}_{t+1}-\widehat{\bm\lambda}^*\|_2 \leq & \frac{C_2 n^{-1/2}\sqrt{\log K}}{\tau_0} \| \widehat{\bm\lambda}_{t} - \widehat{\bm\lambda}^*\|_2 \times\\
&\min\left\{ 1,\frac{C_2n^{-1/2}\sqrt{\log K}}{\tau}\left( 1+\frac{M}{\tau_0} \| \widehat{\bm\lambda}_{t}-\widehat{\bm\lambda}^*\|_2 \right) \right\},
\end{aligned}
\end{equation*}
where the constants $C_2$, $M$ are defined in Lemma \ref{Lemma:assumption}, and $\tau_0 :=\sup\{ c\in[0,\tau]:\,\{ g_i(\bm{\lambda}) \}_{i=1}^{K}\, \mathrm{are}$ $c \mathrm{-strong \, convex \, in\,} \mathcal B(\bm{\widehat\lambda}^*, C_1 n^{-1/2}) \}
$ satisfies $\max\{\tau- C_2n^{-1/2}\sqrt{\log K},0\}\leq \tau_0\leq\tau$.
Therefore, we have $\| \bm{\widehat\lambda}_T - \bm{\widehat\lambda}^*\|_2 = o_p(N^{-1/2})$ if $T\geq\lfloor\log K/\log n\rfloor + 1$. This completes the proof of Theorem \ref{thm:DEL} (a).

By Taylor's expansion of $\ell(\bm{\widehat\lambda}_T;\bm{\mu}_0)$ and Theorem \ref{thm:DEL} (a), it can be proved
\begin{equation}\label{PT1.1.S}
2\sum_{i=1}^{K}\sum_{j=1}^{n}\log \leftsecond 1+ \bm{\widehat\lambda}_T^\top (\bm X_{i,j} - \bm\mu_0) \rightsecond= 2\sum_{i=1}^{K}\sum_{j=1}^{n}\log \leftsecond 1+ \bm{\widehat\lambda}^{*\top} (\bm X_{i,j} - \bm\mu_0) \rightsecond + o_p(1).
\end{equation}
By \cite{owen2001empirical}, we have
\begin{equation}\label{PT1.2.S}
2\sum_{i=1}^{K}\sum_{j=1}^{n}\log \leftsecond 1+ \bm{\widehat\lambda}^{*\top} (\bm X_{i,j} - \bm\mu_0) \rightsecond \overset{d}{\to} \chi_{(d)}^2,~\text{as}~n\to\infty,
\end{equation}
where ``$\overset{d}{\to}$'' denotes convergence in distribution.
It follows from \eqref{PT1.1.S} and \eqref{PT1.2.S} that $\ell(\widehat{\bm\lambda}_T;\bm\mu_0) \stackrel{d}\rightarrow \chi_{(d)}^2$.
This completes the proof of Theorem \ref{thm:DEL} (b).

\subsection{Proof of Lemma \ref{lemma:norm-diff}}

There are two scenarios for the initial value $\widehat{\bm\lambda}_0$. If we have information that the $i_0$-th machine is without Byzantine failures, we can take $\widehat{\bm\lambda}_0 = \arg\min_{\bm\lambda} g_{i_0}(\bm\lambda;\bm\mu_0)$. Otherwise, let $\widehat{\bm\lambda}_0 =\bm 0$. In the following, we first prove Lemma \ref{lemma:norm-diff} for $\widehat{\bm\lambda}_0 = \arg\min_{\bm\lambda} g_{i_0}(\bm\lambda;\bm\mu_0)$.

By \cite{owen2001empirical}, we have
\begin{equation}\label{eq:ini lambda.S}
\widehat{\bm\lambda}_0 = \mathbf\Sigma^{-1} \frac{1}{n} \sum_{j=1}^n (\bm X_{i_0,j}-\bm\mu_0) + o_p(n^{-1/2}),
\end{equation}
where $\mathbf\Sigma = \operatorname{E} \leftsecond(\bm X_{i,1} - \bm\mu_0) (\bm X_{i,1} - \bm\mu_0)^\top\rightsecond$ for any $i\in\mathcal{S}^*$, and hence $\widehat{\bm\lambda}_0 = O_p(n^{-1/2})$.
Let $\nabla g(\widehat{\bm\lambda}_0)$ denote $\nabla g(\widehat{\bm\lambda}_0;\bm\mu_0)$ and $\nabla g_i(\widehat{\bm\lambda}_0)$ denote $\nabla g_i(\widehat{\bm\lambda}_0;\bm\mu_0)$ for $i=1,\ldots,K$.

For $i=1,\ldots,K$, by Taylor's expansion, we have
\begin{equation}\label{SZ.Taylor.S}
\nabla g_i(\widehat{\bm\lambda}_0) = -\frac{1}{n} \sum_{j=1}^n (\bm X_{i,j} - \bm\mu_0) + \frac{1}{n} \sum_{j=1}^n (\bm X_{i,j} - \bm\mu_0) (\bm X_{i,j} - \bm\mu_0)^\top \widehat{\bm\lambda}_0 + \bm R_i'(\widehat{\bm\lambda}_0),
\end{equation}
where $\bm R_i'(\widehat{\bm\lambda}_0) = \sum_{k=2}^{\infty} \nabla^k g_i(\bm 0;\bm\mu_0) \widehat{\bm\lambda}_0^{\otimes k}/k!$ is the higher-order reminder.

(i) Proof of Lemma \ref{lemma:norm-diff} (a):

For some positive constant $C$, we have
\begin{equation}\label{SZ1.eq1.S}
\begin{aligned}
&\Pr \left(\max_{i,i'\in \mathcal S^*, i\ne i'} \|\nabla g_i(\widehat{\bm\lambda}_0) - \nabla g_{i'}(\widehat{\bm\lambda}_0) \|_2 \le C n^{-1/2} \sqrt{\log K} \right)\\
\ge& \Pr \left(\max_{i,i'\in \mathcal S^*, i\ne i'} \|\nabla g_i(\widehat{\bm\lambda}_0) - \nabla g_{i'}(\widehat{\bm\lambda}_0) \|_2 \le C n^{-1/2} \sqrt{\log |\mathcal{S}^*|} \right)\\
\ge& \Pr\left( \max_{i,i'\in \mathcal S^*, i\ne i'} \leftsecond \|\nabla g_i(\widehat{\bm\lambda}_0) - \nabla g_{i_0}(\widehat{\bm\lambda}_0) \|_2 + \|\nabla g_{i'}(\widehat{\bm\lambda}_0) - \nabla g_{i_0}(\widehat{\bm\lambda}_0) \|_2 \rightsecond\rightdot\\
&\quad\quad\leftdot\le C n^{-1/2} \sqrt{\log |\mathcal{S}^*|} \right)\\
\ge& \Pr\left( \max_{i\in\mathcal{S}^*} \|\nabla g_i(\widehat{\bm\lambda}_0) - \nabla g_{i_0}(\widehat{\bm\lambda}_0) \|_2 \le \frac{C n^{-1/2}}{2} \sqrt{\log |\mathcal{S}^*|} \right).
\end{aligned}
\end{equation}
Therefore, we prove Lemma \ref{lemma:norm-diff} (a) by studying the property of $\max_{i\in\mathcal{S}^*} \|\nabla g_i(\widehat{\bm\lambda}_0) - \nabla g_{i_0}(\widehat{\bm\lambda}_0) \|_2$.
It is noted that $n^{-1}\sum_{j=1}^n (\bm X_{i,j} - \bm\mu_0) (\bm X_{i,j} - \bm\mu_0)^\top = \mathbf\Sigma +o_p(1)$.
This together with \eqref{eq:ini lambda.S} and \eqref{SZ.Taylor.S} proves
\begin{equation}\label{SZ1.eq2.S}
\nabla g_i(\widehat{\bm\lambda}_0) = -\frac{1}{n} \sum_{j=1}^n (\bm X_{i,j} - \bm\mu_0) + \frac{1}{n} \sum_{j=1}^n (\bm X_{i_0,j} - \bm\mu_0) + o_p(n^{-1/2}),
\end{equation}
for $i\in\mathcal{S}^*$ and $i\neq i_0$.
Furthermore, according to the assumption that $\widehat{\bm\lambda}_0 = \arg\min_{\bm\lambda} g_{i_0}(\bm\lambda;\bm\mu_0)$, we have $\nabla g_{i_0}(\widehat{\bm\lambda}_0) = \bm 0$.
By \eqref{SZ1.eq2.S}, we have
\begin{equation}\label{SZ1.normal.S}
\sqrt{n} \leftsecond \nabla g_i(\widehat{\bm\lambda}_0) - \nabla g_{i_0}(\widehat{\bm\lambda}_0) \rightsecond \overset{d}{\to} \mathcal{N}(\bm 0,2\mathbf\Sigma),~\text{as}~n\to\infty
\end{equation}
for $i\in\mathcal{S}^*$ and $i\neq i_0$.
Under the conditions of Lemma \ref{lemma:norm-diff}, the eigenvalues of $2 \mathbf{\Sigma}$ are bounded away from zero and infinity.
From \eqref{SZ1.normal.S}, it can be proved
\begin{equation}\label{SZ1.eq3.S}
\max_{i\in\mathcal{S}^*} \|\nabla g_i(\bm\lambda) - \nabla g_{i_0}(\bm\lambda) \|_2 = O_p(n^{-1/2}\sqrt{\log |\mathcal{S}^*|})
\end{equation}
by using similar techniques in the proof of Lemma \ref{lemma:uni speed.S}.
By \eqref{SZ1.eq3.S}, for $\forall \varepsilon >0$, there exists some constant $M'>0$ such that
\begin{equation}\label{SZ1.eq4.S}
\Pr \left( \max_{i\in\mathcal{S}^*} \|\nabla g_i(\bm\lambda) - \nabla g_{i_0}(\bm\lambda) \|_2 \le M' n^{-1/2}\sqrt{\log |\mathcal{S}^*|} \right) \ge 1-\varepsilon.
\end{equation}
Combining \eqref{SZ1.eq1.S} and \eqref{SZ1.eq4.S}, and setting $C=2M'$, we have
$$
\Pr \left( \max_{i,i'\in \mathcal S^*, i\ne i'} \|\nabla g_i(\widehat{\bm\lambda}_0) - \nabla g_{i'}(\widehat{\bm\lambda}_0) \|_2 \le C n^{-1/2} \sqrt{\log K} \right) \ge 1- \varepsilon,
$$
which proves Lemma \ref{lemma:norm-diff} (a).

(ii) Proof of Lemma \ref{lemma:norm-diff} (b):

It can be shown that
\begin{equation*}
\begin{aligned}
& \Pr \left( \min_{i\in\mathcal{S}^*,i'\notin \mathcal{S}^*} \| \nabla g_i(\widehat{\bm\lambda}_0) - \nabla g_{i'}(\widehat{\bm\lambda}_0)\|_2 \ge C_n \right)\\
\ge& \Pr \left( \min_{i'\notin \mathcal{S}^*} \| \nabla g_{i'}(\widehat{\bm\lambda}_0) - \nabla g_{i_0}(\widehat{\bm\lambda}_0) \|_2 - \max_{i\in\mathcal{S}^*} \| \nabla g_{i}(\widehat{\bm\lambda}_0) - \nabla g_{i_0}(\widehat{\bm\lambda}_0) \|_2 \ge C_n \right)\\
=& \Pr \left( \min_{i'\notin \mathcal{S}^*} \| \nabla g_{i'}(\widehat{\bm\lambda}_0) - \nabla g_{i_0}(\widehat{\bm\lambda}_0) \|_2 \ge C_n' \right),
\end{aligned}
\end{equation*}
where $C_n':= C_n + \max_{i\in\mathcal{S}^*} \| \nabla g_{i}(\widehat{\bm\lambda}_0) - \nabla g_{i_0}(\widehat{\bm\lambda}_0) \|_2$.
Therefore, we prove Lemma \ref{lemma:norm-diff} (b) by studying the property of $\| \nabla g_{i'}(\widehat{\bm\lambda}_0) - \nabla g_{i_0}(\widehat{\bm\lambda}_0) \|_2$. It is noted $\nabla g_{i_0}(\widehat{\bm\lambda}_0) = \bm 0$ according to the definition of $\widehat{\bm\lambda}_0$.
Hence, it follows from \eqref{eq:ini lambda.S} and \eqref{SZ.Taylor.S} that
\begin{equation}\label{SZ2.eq2.S}
\nabla g_{i'}(\widehat{\bm\lambda}_0)-\nabla g_{i_0}(\widehat{\bm\lambda}_0) = - \bm \Delta_{i'} + \bm B_{i'},
\end{equation}
where $\bm \Delta_{i'} = \operatorname{E}(\bm X_{i',1}) - \bm\mu_0$ and
\begin{equation}\label{SZ2.defB.S}
\begin{aligned}
\bm B_{i'} =& - \frac{1}{n} \sum_{j=1}^n \leftsecond\bm X_{i',j} - \operatorname{E}(\bm X_{i',1}) \rightsecond  + \mathbf\Sigma_{i'} \mathbf\Sigma^{-1} \frac{1}{n} \sum_{j=1}^n (\bm X_{i_0,j} - \bm\mu_0)\\
& + \bm \Delta_{i'} \bm \Delta_{i'}^\top \mathbf\Sigma^{-1} \frac{1}{n} \sum_{j=1}^n (\bm X_{i_0,j} - \bm\mu_0) + \bm r_{i'}(\widehat{\bm\lambda}_0),
\end{aligned}
\end{equation}
with $\mathbf\Sigma_{i'} = \operatorname{E} \leftsecond \bm X_{i',1} - \operatorname{E}(\bm X_{i',1}) \rightsecond \leftsecond \bm X_{i',1} - \operatorname{E}(\bm X_{i',1}) \rightsecond^\top$ and $\|\bm r_{i'}(\widehat{\bm\lambda}_0) \|_2 = o_p(n^{-1/2})$ for $i'\notin\mathcal{S}^*$.
Let $\mathcal{J}:= \{i'\notin\mathcal{S}^*:\, \|\bm\Delta_{i'}\|_2 < \infty\}$. We can derive
\begin{equation}\label{SZ2.eq3.S}
\begin{aligned}
& \Pr \left( \min_{i'\notin \mathcal{S}^*} \| \nabla g_{i'}(\widehat{\bm\lambda}_0) - \nabla g_{i_0}(\widehat{\bm\lambda}_0) \|_2 \ge C_n' \right)\\
=& \Pr \left( \min_{i'\in \mathcal{J}} \| \nabla g_{i'}(\widehat{\bm\lambda}_0) - \nabla g_{i_0}(\widehat{\bm\lambda}_0) \|_2 \ge C_n' \right)\\
& \times \Pr \left( \min_{i'\notin (\mathcal{S}^* \cup \mathcal{J})} \| \nabla g_{i'}(\widehat{\bm\lambda}_0) - \nabla g_{i_0}(\widehat{\bm\lambda}_0) \|_2 \ge C_n' \right),
\end{aligned}
\end{equation}
since $\{ i':\, i'\notin \mathcal{S}^* \} \cup \{ i':\, i'\notin (\mathcal{S}^* \cup \mathcal{J}) \} = \{ i':\, i'\notin \mathcal{S}^* \}$ and $\{ i':\, i'\notin \mathcal{S}^* \} \cup \{ i':\, i'\notin (\mathcal{S}^* \cap \mathcal{J}) \} = \varnothing$.

For the first term on the right hand side of \eqref{SZ2.eq3.S}, it can be derived
\begin{equation}\label{SZ2.eq4.S}
\begin{aligned}
& \Pr \left( \min_{i'\in \mathcal{J}} \| \nabla g_{i'}(\widehat{\bm\lambda}_0) - \nabla g_{i_0}(\widehat{\bm\lambda}_0) \|_2 \ge C_n' \right)\\
\ge& \Pr \left( \min_{i'\in \mathcal{J}} \left( \|\bm \Delta_{i'}\|_2 - \|\bm B_{i'}\|_2 \right) \ge C_n' \right)\\
\ge& \Pr \left( \min_{i'\in \mathcal{J}} \|\bm \Delta_{i'}\|_2 - C_n \ge \max_{i'\in \mathcal{J}} \|\bm B_{i'}\|_2 + \max_{i\in\mathcal{S}^*} \| \nabla g_{i}(\widehat{\bm\lambda}_0) - \nabla g_{i_0}(\widehat{\bm\lambda}_0) \|_2 \right).
\end{aligned}
\end{equation}
According to \eqref{SZ2.defB.S}, by Central Limit Theorem, we have
\begin{equation}\label{SZ2.normal.S}
\begin{aligned}
\sqrt{n} \bm B_{i'} \overset{d}{\to} \mathcal{N}(\bm 0, \mathbf{\Psi_{i'}}),~\text{as}~n\to\infty,
\end{aligned}
\end{equation}
for $i'\in\mathcal{J}$, where
\begin{equation*}
\mathbf\Psi_{i'} = \left\{
\begin{aligned}
&\mathbf\Sigma_{i'} + \mathbf\Sigma_{i'} \mathbf\Sigma^{-1} \mathbf\Sigma_{i'}, \quad &\text{if}~\|\bm\Delta_{i'}\|_2 = o(1),\\
&\mathbf\Sigma_{i'} + \left( \mathbf\Sigma_{i'} + \bm \Delta_{i'} \bm \Delta_{i'}^\top \right) \mathbf\Sigma^{-1} \left( \mathbf\Sigma_{i'} + \bm \Delta_{i'} \bm \Delta_{i'}^\top \right), \quad &\text{otherwise}.
\end{aligned}
\right.
\end{equation*}
It is easy to see that the eigenvalues of $\mathbf{\Psi}_{i'}$ are bounded away from zero and infinity under the conditions of Lemma \ref{lemma:norm-diff}.
Therefore, Lemma \ref{lemma:uni speed.S} together with \eqref{SZ2.normal.S} proves
\begin{equation}\label{SZ2.eq5.S}
\max_{i'\in \mathcal{J}} \|\bm B_{i'}\|_2 = O_p(n^{-1/2}\sqrt{\log K}).
\end{equation}
By Lemma \ref{lemma:norm-diff} (a) and \eqref{SZ2.eq5.S}, it can be derived
\begin{equation}\label{SZ2.eq6.S}
(\log K)^{-1/2} \sqrt{n} \leftsecond \max_{i'\in \mathcal{J}} \|\bm B_{i'}\|_2 + \max_{i\in\mathcal{S}^*} \| \nabla g_{i}(\widehat{\bm\lambda}_0) - \nabla g_{i_0}(\widehat{\bm\lambda}_0) \|_2 \rightsecond = O_p(1).
\end{equation}
According to \eqref{SZ2.eq6.S} and the assumption $(\log K)^{-1/2} \sqrt{n} \{ \min_{i'\notin\mathcal{S}^*}\| \bm \Delta_{i'} \|_2 - C_n \} \to \infty$ as $n\to\infty$, we have
\begin{equation}\label{SZ2.eq7.S}
\Pr \left( \min_{i'\in \mathcal{J}} \|\bm \Delta_{i'}\|_2 - C_n \ge \max_{i'\in \mathcal{J}} \|\bm B_{i'}\|_2 + \max_{i\in\mathcal{S}^*} \| \nabla g_{i}(\widehat{\bm\lambda}_0) - \nabla g_{i_0}(\widehat{\bm\lambda}_0) \|_2 \right) \to 1,
\end{equation}
as $n\to\infty$.
Combining \eqref{SZ2.eq4.S} and \eqref{SZ2.eq7.S}, we have
\begin{equation}\label{SZ2.eq8.S}
\Pr \left( \min_{i'\in \mathcal{J}} \| \nabla g_{i'}(\widehat{\bm\lambda}_0) - \nabla g_{i_0}(\widehat{\bm\lambda}_0) \|_2 \ge C_n' \right)
\to 1,~\text{as}~n\to\infty.
\end{equation}

According to \eqref{SZ2.eq2.S}, for $i' \notin (\mathcal{S}^* \cup \mathcal{J})$, the dominate term of $\|\nabla g_{i'}(\widehat{\bm\lambda}_0)\|_2$ is $\| - \bm\Delta_{i'} + \bm \Delta_{i'} \bm \Delta_{i'}^\top \mathbf\Sigma^{-1} n^{-1} \sum_{j=1}^n (\bm X_{i_0,j} - \bm\mu_0) \|_2$.
For the second term on the right hand side of \eqref{SZ2.eq3.S}, we then have
\begin{equation}\label{SZ2.eq9.S}
\begin{aligned}
& \Pr \left( \min_{i'\notin (\mathcal{S}^* \cup \mathcal{J})} \| \nabla g_{i'}(\widehat{\bm\lambda}_0) - \nabla g_{i_0}(\widehat{\bm\lambda}_0) \|_2 \ge C_n' \right)\\
\ge& \Pr \left( \min_{i'\notin (\mathcal{S}^* \cup \mathcal{J})} \| - \bm\Delta_{i'} + \bm \Delta_{i'} \bm \Delta_{i'}^\top \mathbf\Sigma^{-1} n^{-1} \sum_{j=1}^n (\bm X_{i_0,j} - \bm\mu_0) \|_2 \to \infty \right)\\
=& \Pr \left( \min_{i\in\mathcal{D}_1} \| \bm \Delta_{i'} \bm \Delta_{i'}^\top \mathbf\Sigma^{-1} n^{-1} \sum_{j=1}^n (\bm X_{i_0,j} - \bm\mu_0) - \bm\Delta_{i'} \|_2 \to \infty \right) \\
& \times \Pr \left( \min_{i\in\mathcal{D}_2} \| \bm \Delta_{i'} \bm \Delta_{i'}^\top \mathbf\Sigma^{-1} n^{-1} \sum_{j=1}^n (\bm X_{i_0,j} - \bm\mu_0) - \bm\Delta_{i'} \|_2 \to \infty \right),
\end{aligned}
\end{equation}
where $\mathcal D_1=\{i':\, i'\notin (\mathcal{S}^* \cup \mathcal{J}), \|\bm\Delta_{i'}\|_2 /\sqrt{n} ~\text{is bounded away from zero and infinity}\}$ and $\mathcal D_2=\{i':\, i'\notin (\mathcal{S}^* \cup \mathcal{J}), \|\bm\Delta_{i'}\|_2 /\sqrt{n} \to 0 ~\text{or}~\infty~\text{as}~n\to \infty\}$ are two disjoint index sets.

For $i'\in \mathcal{D}_1$, let $\mathbf{\Pi}_{i'} = n^{-1} \bm\Delta_{i'} \bm\Delta_{i'}^\top$ and we have $\left\| \mathbf{\Pi}_{i'} \mathbf{\Sigma}^{-1/2} \right\|_2$ is bounded away from zero and infinity since $\|\bm\Delta_{i'}\|_2/\sqrt{n}$ is bounded away from zero and infinity. Let $\bm Z=\mathbf{\Sigma}^{-1/2} n^{-1/2} \sum_{j=1}^n (\bm X_{i_0,j} - \bm\mu_0)$ and we have $\bm Z \overset{d}{\to} \mathcal{N}(\bm 0, \mathbf{I}_d)$ as $n\to \infty$ by Central Limit Theorem.
Therefore, we have
\begin{eqnarray}
\label{D1}
&&\Pr \left( \min_{i'\in \mathcal{D}_1} 
\| \bm\Delta_{i'} \bm\Delta_{i'}^\top \mathbf{\Sigma}^{-1} n^{-1} \sum_{j=1}^n (\bm X_{i_0,j} - \bm\mu_0) - \bm\Delta_{i'} \|_2 \to \infty \right)\\\nonumber
&=&
\Pr \left( \min_{i'\in \mathcal{D}_1} \sqrt{n} \left\| \mathbf{\Pi}_{i'} \mathbf{\Sigma}^{-1/2} \bm Z - n^{-1/2} \bm\Delta_{i'} \right\|_2 \to \infty \right) \\\nonumber
&\to& 1, ~\text{as}~ n\to \infty,
\end{eqnarray}
by using similar techniques in the proof of Lemma \ref{lemma:uni speed.S}.

Moreover, it is easy to see 
\begin{eqnarray}
\label{D2}
\Pr \left( \min_{i'\in\mathcal{D}_2} \| \bm \Delta_{i'} \bm \Delta_{i'}^\top \mathbf\Sigma^{-1} n^{-1} \sum_{j=1}^n (\bm X_{i_0,j} - \bm\mu_0) - \bm\Delta_{i'} \|_2 \to \infty \right) \to 1,~\text{as}~ n \to \infty.
\end{eqnarray} 

By \eqref{SZ2.eq9.S}, \eqref{D1}, and \eqref{D2}, we have
\begin{equation}\label{SZ2.eq10.S}
\Pr \left( \min_{i'\notin (\mathcal{S}^* \cup \mathcal{J})}  \| \nabla g_{i'}(\widehat{\bm\lambda}_0) - \nabla g_{i_0}(\widehat{\bm\lambda}_0) \|_2 \ge C_n' \right)\to 1,~\text{as}~ n \to \infty. 
\end{equation}
Combining \eqref{SZ2.eq3.S}, \eqref{SZ2.eq8.S}, and \eqref{SZ2.eq10.S}, we then prove Lemma \ref{lemma:norm-diff} (b).

For the case where we have no information about the machine without Byzantine failures, we can take $\widehat{\bm\lambda}_0 = \bm 0$ as the initial estimator.
Then $\nabla g_i(\widehat{\bm\lambda}_0) - \nabla g_{i'}(\widehat{\bm\lambda}_0) = -n^{-1} \sum_{j=1}^n (\bm X_{i,j} - \bm X_{i',j})$.
In this situation, we can complete the proof by using similar techniques to the proof with $\widehat{\bm\lambda}_0 = \arg\min_{\bm\lambda} g_{i_0}(\bm\lambda;\bm\mu_0)$.
In the following, we present some key steps for the proof with $\widehat{\bm\lambda}_0 = \bm 0$.

(i') Proof of Lemma \ref{lemma:norm-diff} (a) with $\widehat{\bm\lambda}_0 = \bm 0$:

We have
\begin{equation}\label{PL.la0.1.S}
\sqrt{n} \leftsecond \nabla g_i(\bm 0) - \nabla g_{i'}(\bm 0) \rightsecond \overset{d}{\to} \mathcal{N}(\bm 0,2\mathbf\Sigma),~\text{as}~n\to\infty
\end{equation}
for $i,i'\in\mathcal{S}^*$ and $i\neq i'$.
The eigenvalues of $2\mathbf\Sigma$ are bounded away from zero and infinity under the assumption.
From \eqref{PL.la0.1.S}, it can be derived
\begin{equation}\label{PL.la0.2.S}
\max_{i,i'\in\mathcal{S}^*,i\neq i'} \| \nabla g_i(\bm 0) - \nabla g_{i'}(\bm 0) \|_2 = O_p(n^{-1/2}\sqrt{\log K})
\end{equation}
according to Lemma \ref{lemma:uni speed.S}.
From \eqref{PL.la0.2.S}, for $\forall \varepsilon>0$, there exists a constant $C$ such that
$$
\Pr \left(\max_{i,i'\in \mathcal S^*, i\ne i'} \|\nabla g_i(\bm 0) - \nabla g_{i'}(\bm 0) \|_2 \le C n^{-1/2} \sqrt{\log K} \right) \ge 1-\varepsilon,
$$
which implies Lemma \ref{lemma:norm-diff} (a).

(ii') Proof of Lemma \ref{lemma:norm-diff} (b) with $\widehat{\bm\lambda}_0 = \bm 0$:

It can be derived
\begin{equation}\label{PL.la0.3.S}
\nabla g_i(\bm 0) - \nabla g_{i'}(\bm 0) = \bm\Delta_{i'} + \frac{1}{n} \sum_{j=1}^n \leftsecond \bm X_{i',j} - \operatorname{E} (\bm X_{i',1}) \rightsecond - \frac{1}{n} \sum_{j=1}^n ( \bm X_{i,j} - \bm\mu_0 ).
\end{equation}
Let $\bm A_{i,i'} = n^{-1} \sum_{j=1}^n \leftsecond \bm X_{i',j} - \operatorname{E} (\bm X_{i',1}) \rightsecond - n^{-1} \sum_{j=1}^n ( \bm X_{i,j} - \bm\mu_0 )$ in \eqref{PL.la0.3.S}. By Central Limit Theorem, we have
\begin{equation}\label{PL.la0.4.S}
\sqrt{n} \bm A_{i,i'} \overset{d}{\to} \mathcal{N}(\bm 0,\mathbf{\Sigma}_{i,i'}),~\text{as}~n\to\infty,
\end{equation}
where $\mathbf{\Sigma}_{i,i'} := \mathbf{\Sigma}_i + \mathbf{\Sigma}_{i'}$.
The eigenvalues of $\mathbf{\Sigma}_{i,i'}$ are bounded away from zero and infinity under the condition of Lemma \ref{lemma:norm-diff}.
Therefore, based on \eqref{PL.la0.4.S}, we can derive
\begin{equation}\label{PL.la0.5.S}
\max_{i\in\mathcal{S}^*,i'\notin\mathcal{S}^*} \left\| \bm A_{i,i'} \right\|_2 = O_p(n^{-1/2}\sqrt{\log K})
\end{equation}
according to Lemma \ref{lemma:uni speed.S}.
Hence, by the assumption $(\log K)^{-1/2} \sqrt{n} \{ \min_{i'\notin\mathcal{S}^*}\| \bm \Delta_{i'} \|_2 - C_n \} \to \infty$ as $n\to\infty$ and \eqref{PL.la0.5.S}, we have
\begin{equation*}
\begin{aligned}
& \Pr \left( \min_{i\in\mathcal{S}^*,i'\notin \mathcal{S}^*} \| \nabla g_i(\bm 0) - \nabla g_{i'}(\bm 0)\|_2 \ge C_n \right)\\
\ge& \Pr \left( \min_{i\in\mathcal{S}^*,i'\notin \mathcal{S}^*} \| \bm \Delta_{i'} \|_2 - C_n \ge \max_{i\in\mathcal{S}^*,i'\notin\mathcal{S}^*} \left\| \bm A_{i,i'} \right\|_2 \right)\\
\to& 1,~\text{as}~n\to\infty.
\end{aligned}
\end{equation*}
This completes the proof of Lemma \ref{lemma:norm-diff} (b).

\subsection{Proof of Theorem \ref{thm:cluster}}

For $\gamma_n$ satisfying $(\log K)^{-1/2} \sqrt{n} (\min_{i'\notin \mathcal{S} ^*}  \| \bm\Delta_{i'} \|_2 - \gamma_n ) \to \infty$ as $n\to \infty$, by Lemma \ref{lemma:norm-diff} (b), we have
\begin{equation*}
\Pr \left( \min_{i\in\mathcal{S}^*,i'\notin \mathcal{S}^*} \| \nabla g_i(\bm\lambda;\bm\mu_0)  - \nabla g_{i'}(\bm\lambda;\bm\mu_0)\|_2 \ge \gamma_n \right) \to 1, ~\text{as}~ n\to \infty.
\end{equation*}
According to Lemma S1 in the supplementary material of \cite{wang2023robust}, we have $\| \widetilde{\bm\mu} - \bm\mu_0 \|_2 = O_p(n^{-1/2}\sqrt{\log K})$ under the assumption $|\mathcal{S}^*|>K/2$.
From Taylor's expansion, we have
$$
\leftsecond \nabla g_i(\bm\lambda;\widetilde{\bm\mu})  - \nabla g_{i'}(\bm\lambda;\widetilde{\bm\mu}) \rightsecond - \leftsecond \nabla g_i(\bm\lambda;\bm\mu_0)  - \nabla g_{i'}(\bm\lambda;\bm\mu_0) \rightsecond = \frac{2}{n} \sum_{j=1}^n (\bm X_{i,j} - \bm X_{i',j})(\widetilde{\bm\mu} - \bm\mu_0)^\top.
$$
We have $\| 2n^{-1} \sum_{j=1}^n (\bm X_{i,j} - \bm X_{i',j})(\widetilde{\bm\mu} - \bm\mu_0)^\top \|_2 = O_p(n^{-1/2}\sqrt{\log K})$, which is negligible comparing to $\nabla g_i(\bm\lambda;\bm\mu_0)  - \nabla g_{i'}(\bm\lambda;\bm\mu_0)$.
Hence, we have
\begin{equation}\label{PT2.eq1.S}
\Pr \left( \min_{i\in\mathcal{S}^*,i'\notin \mathcal{S}^*} \| \nabla g_i(\bm\lambda;\widetilde{\bm\mu})  - \nabla g_{i'}(\bm\lambda;\widetilde{\bm\mu})\|_2 \ge \gamma_n \right) \to 1, ~\text{as}~ n\to \infty.
\end{equation}
Under the assumption $|\mathcal{S}^{*}|> K/2$, for any $i'\notin \mathcal{S}^*$, we have
\begin{equation}\label{PT2.eq2.S}
\Pr (s_{i'} < K/2) \to 1,~\text{as}~n\to\infty
\end{equation}
by \eqref{PT2.eq1.S} and the definition $s_i:= \# \{i':\, \| \nabla g_{i'}(\widehat{\bm\lambda}_0;\widetilde{\bm\mu}) - \nabla g_i(\widehat{\bm\lambda}_0;\widetilde{\bm\mu})\|_2 < \gamma_n, i'=1,\ldots,K \}$.
For any $i'\notin \mathcal{S}^*$, we can prove $\Pr (i'\notin \mathcal{S}) \to 1$, as $n\to\infty$, by \eqref{PT2.eq2.S} and $\mathcal{S} := \{i:\,s_i> K/2,i=1,\ldots,K\}$.
Hence, we have
\begin{equation}\label{PT2.1.S}
\Pr (\mathcal S^* \supseteq \mathcal S)\to 1,~\text{as}~n\to\infty.
\end{equation}

By assumption that $\gamma_n$ satisfies $\gamma_n\sqrt{n}/\sqrt{\log K}\to\infty$ as $n\to \infty$, according to Lemma \ref{lemma:norm-diff} (a), similarly, we have
\begin{equation*}
\Pr \left( \max_{i, i'\in \mathcal{S}^*, i\neq i'}\| \nabla g_i(\bm\lambda;\widetilde{\bm\mu})  - \nabla g_{i'}(\bm\lambda;\widetilde{\bm\mu})\|_2 < \gamma_n \right) \to 1, ~\text{as}~ n\to \infty.
\end{equation*}
For any $i\in \mathcal{S}^*$, it can be derived
\begin{equation}\label{PT2.eq4.S}
\Pr (s_{i} > K/2) \to 1,~\text{as}~n\to\infty
\end{equation}
by the assumption $|\mathcal{S}^{*}|> K/2$ and the definition of $s_i$.
Therefore, we have $\Pr (i\in \mathcal{S}) \to 1$ as $n\to\infty$ 
for any $i\in \mathcal{S}^*$ by \eqref{PT2.eq4.S} and the definition of $\mathcal{S}$.
This implies
\begin{equation}\label{PT2.2.S}
\Pr (\mathcal S^* \subseteq \mathcal S)\to 1,~\text{as}~n\to\infty.
\end{equation}

Combining \eqref{PT2.1.S} and \eqref{PT2.2.S}, we have $\Pr (\mathcal S^* = \mathcal S)\to 1~\text{as}~n\to\infty$, and thus the selection consistency property is proved.

\subsection{Proof of Theorem \ref{thm:RDEL}}

Theorem \ref{thm:DEL} and Theorem \ref{thm:cluster} together proves Theorem \ref{thm:RDEL}.

From Theorem \ref{thm:cluster}, it is known that $\Pr (\mathcal{S} = \mathcal{S}^*) \to 1$ as $n\to \infty$ under the conditions of Lemma \ref{Lemma:assumption} and Theorem \ref{thm:cluster}.
Furthermore, we have $\|\widehat{\bm\lambda}_{\mathcal{S}T} - \widehat{\bm\lambda}_{\mathcal{S}^*} \|_2=o_p(|\mathcal{S}^*|^{-1/2}n^{-1/2})$ for $T \geq \lfloor \log |\mathcal{S}^*|/\log n \rfloor + 1$ by similar techniques to the proof of Theorem \ref{thm:DEL} (a).
Recalling the assumption that $K/2 <|\mathcal{S}^*| \le K$, we have
$$
\lfloor \log K/\log n \rfloor + 1 \ge \lfloor \log |\mathcal{S}^*|/\log n \rfloor + 1.
$$
Hence, it can be derived
$$
\|\widehat{\bm\lambda}_{\mathcal{S}T} - \widehat{\bm\lambda}_{\mathcal{S}^*}\|_2 = o_p(|\mathcal{S}^*|^{-1/2}n^{-1/2})= o_p(N^{-1/2}),
$$
for $T\ge \lfloor \log K/\log n \rfloor + 1$.
This completes the proof of Theorem \ref{thm:RDEL} (a).

Similar to the proof of Theorem \ref{thm:DEL} (b), we have
$$
2\sum_{i\in\mathcal{S}}\sum_{j=1}^n \log\leftsecond 1+ \widehat{\bm\lambda}_{\mathcal{S}T} (\bm X_{i,j} - \bm\mu_0) \rightsecond \overset{d}{\to} \chi^2_{(d)}
$$
from $\Pr (\mathcal S = \mathcal S^*)\to 1$ as $n\to\infty$ and $\|\widehat{\bm\lambda}_{\mathcal{S}T} - \widehat{\bm\lambda}_{\mathcal{S}^*} \|_2 = o_p(N^{-1/2})$.
This completes the proof of Theorem \ref{thm:RDEL} (b).

\vskip 0.2in
\bibliography{DEL}

\end{document}